\newcolumntype{P}[1]{>{\centering\arraybackslash}p{#1}}
\newcommand{\junk}[1]{}
\newcounter{obs}
\newtheorem{observation}[obs]{Observation}
\newcommand{\argmax}{\operatornamewithlimits{argmax}}
\newcommand{\MI}{{\it MI}}
\newcommand{\MII}{{\it MI}^{g}}
\newcommand{\Seed}{S^{*}}  
\newcommand{\Seedg}{S^{g}}		
\newcommand{\oSeed}{\overline{S}^{*}}
\newcommand{\oSeedg}{\overline{S}^{g}}
\newcommand{\uSeedg}{\underline{S}^{g}}
\newcommand{\clambda}{\lambda^{c}}
\newcommand{\ulambda}{\underline{\lambda}}
\newcommand{\olambda}{\overline{\lambda}}
\newcommand{\efilter}{\tau}
\newcommand{\vfilter}{\nu}
\newcommand{\reo}{\Upsilon^{E}}
\newcommand{\rvo}{\Upsilon^{V}}
\newcommand{\AlgFont}[1]{\textsf{\small #1}}
\newcommand{\TAGreedy}{{\AlgFont{TA-Greedy}}}
\newcommand{\TAPMIA}{\AlgFont{TA-PMIA}}
\newcommand{\MISGreedy}{\AlgFont{MIS[Greedy]}}
\newcommand{\MISPMIA}{\AlgFont{MIS[PMIA]}}
\newcommand{\BTSGreedy}{\AlgFont{BTS[Greedy]}}
\newcommand{\BTSPMIA}{\AlgFont{BTS[PMIA]}}
\newcommand{\TAPageRank}{\AlgFont{TA-PageRank}}
\newcommand{\TAWeightDegree}{\AlgFont{TA-WeightedDegree}}
\newcommand{\TOGreedy}{\AlgFont{TO-Greedy}}
\newcommand{\TODegree}{\AlgFont{TO-Degree}}
\newcommand{\TORandom}{\AlgFont{Random}}
\newcommand*{\QED}{\hfill\ensuremath{\blacksquare}}%
\newcommand{\enablemovecache}{false}
\newcommand{\setcache}[2]{
\ifthenelse{\equal{\enablemovecache}{true}}{%
	\newcommand{#1}{#2}
}{#2} 
}
\newcommand{\putcache}[1]{
\ifthenelse{\equal{\enablemovecache}{true}}{%
	#1
}{} 
}
\begin{document}

\title{Real-time Topic-aware Influence Maximization Using Preprocessing}
\author{Wei Chen\thanks{Microsoft Research. Email: {weic@microsoft.com}.} \\
\and
Tian Lin\thanks{Tsinghua University. Email: {lint10@mails.tsinghua.edu.cn}.}\\
\and
Cheng Yang\thanks{Tsinghua University. Email: {albertyang33@gmail.com}.}}

\date{}
\maketitle

\begin{abstract}
%
%

Influence maximization is the task of finding a set of seed nodes in a social network such that
	the influence spread of these seed nodes based on certain influence diffusion model is maximized.
Topic-aware influence diffusion models have been recently proposed to address the issue that
	influence between a pair of users are often topic-dependent and information, ideas, innovations
	etc. being propagated in networks (referred collectively as items in this paper) are typically
	mixtures of topics.
In this paper, we focus on the topic-aware influence maximization task.
In particular, 
	we study preprocessing methods for these topics to avoid redoing influence maximization for each item
	from scratch.
We explore two preprocessing algorithms with theoretical justifications.
Our empirical results on data obtained in a couple of existing studies 
	demonstrate that one of our algorithms stands out as a strong candidate providing
	microsecond online response time and competitive influence spread, with reasonable preprocessing effort.

\end{abstract}


\section{Introduction}

In a social network, information, ideas, rumors, and innovations can be propagated to a large
	number of people because of the social influence between the connected peers in the network.
{\em Influence maximization} is the task of finding a set of {\em seed nodes} in a social network such that
	the influence propagated from the seed nodes can reach the largest number of people in the network.
More technically, a social network is modeled as a graph with nodes representing individuals and directed
	edges representing influence relationships.
The network is associated with a stochastic diffusion model (such as
	independent cascade model and linear threshold model \cite{kempe2003maximizing}) characterizing the influence propagation
	dynamics starting from the seed nodes.
Influence maximization is to find a set of $k$ seed nodes in the network such that the 
	{\em influence spread}, defined as the expected number
	of nodes influenced (or activated) through influence diffusion starting from the seed nodes,
	is maximized (\cite{kempe2003maximizing,CLC13}).

Influence maximization has a wide range of applications including
	viral marketing~\cite{domingos2001mining,richardson2002mining,kempe2003maximizing},
	information monitoring and outbreak detection~\cite{leskovec2007cost}, 
	competitive viral marketing and rumor control \cite{BAA11,HeSCJ12}, or even text summarization~\cite{WangYLZH13}
	(by modeling a word influence network).
As a result, influence maximization has been extensively studied in the past decade.
Research directions include 
	improvements in the efficiency and scalability of influence maximization 
	algorithms (e.g., \cite{chen2009efficient,wang2012scalable,goyal2011simpath}),
	extensions to other diffusion models and optimization problems
	(e.g., \cite{BAA11,bhagat_2012_maximizing,HeSCJ12}), and influence model
	learning from real-world data (e.g., \cite{saito2008prediction,tang2009social,goyal2010learning}).

Most of these works treat diffusions of all information, rumors, ideas, etc.
	(collectively referred as {\em items} in this paper) as following the same model with a single
	set of parameters.
In reality, however, influence between a pair of friends may differ depending on the topic.
For example, one may be more influential to the other on high-tech gadgets, while the other is more
	influential on fashion topics, or one researcher is more influential on data mining topics to her
	peers but less influential on algorithm and theory topics.
Recently, Barbieri et al. \cite{barbieri2013topic} propose the topic-aware independent cascade (TIC)
	and linear threshold (TLT) models, in which a diffusion item is a mixture of topics and 
	influence parameters for each item are also mixtures of parameters for individual topics.
They provide learning methods to learn influence parameters in the topic-aware models from real-world
	data.
Such topic-mixing models require new thinking in terms of the influence maximization task, which
	is what we address in this paper.

In this paper, we adopt the models proposed in \cite{barbieri2013topic} and 
	study efficient topic-aware influence maximization schemes.
One can still apply topic-oblivious influence maximization algorithms in online processing
	of every diffusion item, but it may not be efficient when there are a large number of 
	items with different topic mixtures or real-time responses are required.
Thus, our focus is on preprocessing individual topic influence such that when a diffusion item with 
	certain topic mixture comes, the online processing of finding the seed set is fast.
To do so, our first step is to collect two datasets in the past studies with available topic-aware
	influence analysis results on real networks and investigate their properties pertaining to our
	preprocessing purpose (Section~\ref{sec:obs}).
Our data analysis shows that in one network users and their relationships are largely separated by
	different topics while in the other network they have significant overlaps on different topics.
Even with this difference, a common property we find is that in both datasets most top
	seeds for a topic mixture come from top seeds of the constituent topics, which matches
	our intuition that influential individuals for a mixed item are usually influential in at least
	one topic category.

Motivated by our findings from the data analysis, we explore two preprocessing based algorithms
	(Section~\ref{sec:algo}).
The first algorithm, {\em Best Topic Selection} (BTS), minimizes online processing by simply using a seed set for
	one of the constituent topics.
Even for such a simple algorithm, we are able to provide a theoretical approximation ratio
	(when a certain property holds), and thus BTS serves as a baseline for preprocessing
	algorithms.
The second algorithm, {\em Marginal Influence Sort} (MIS), further uses pre-computed marginal influence
	of seeds on each topic to avoid slow greedy computation.
We provide a theoretical justification showing that MIS can be as good as the offline greedy
	algorithm when nodes are fully separated by topics.

We then conduct experimental evaluations of these algorithms and comparing them with both the greedy
	algorithm and a state-of-the-art heuristic algorithm PMIA~\cite{wang2012scalable}, on 
	the two datasets used in data analysis as well as a third dataset for testing scalability
	(Section~\ref{sec:experiments}).
From our results, we see that MIS algorithm stands out as the best candidate for preprocessing based
	real-time influence maximization: it finishes online processing within a few microseconds and its influence spread either
	matches or is very close to that of the greedy algorithm.
	
Our work, together with a recent independent work~\cite{aslayonline}, is one of the first that
	study topic-aware influence maximization with focus on preprocessing.
Comparing to~\cite{aslayonline}, our contributions include:
	(a) we include data analysis on two real-world datasets with learned influence parameters, which
		shows different topical influence properties and motivates our algorithm design;
	(b) we provide theoretical justifications to our algorithms; 
	(c) the use of marginal influence of seeds in individual topics in MIS is novel, and is complementary
	to the approach in~\cite{aslayonline}; 
	(d) even though MIS is quite simple, it achieves competitive influence spread within microseconds of online processing time
	rather than milliseconds needed in~\cite{aslayonline}.



\section{Preliminaries} \label{sec:pre}
In this section, we introduce the background and problem definition on the 
	topic-aware influence diffusion models.
We focus on the independent cascade model \cite{kempe2003maximizing} for ease of presentation, 
	but our results also hold for
	other models parameterized with edge parameters such as the linear threshold model \cite{kempe2003maximizing}.

\subsection{Independent cascade model}
We consider a social network as a directed graph $G=(V,E)$, where each node in $V$ represents a user,
and each edge in $E$ represents the relationship between two users.
For every edge $(u, v) \in E$, denote its {\em influence probability} as $p(u, v) \in [0, 1]$,
and for all $(u, v) \notin E$ or $u = v$, we assume $p(u, v) = 0$.

The {\em independent cascade (IC)} model, defined in \cite{kempe2003maximizing},
 	captures the stochastic process of contagion in discrete time.
Initially at time step $t=0$, a set of nodes $S \subseteq V$ called {\em seed nodes} are activated.
At any time $t \ge 1$, if node $u$ is activated at time $t-1$,
 	it has one chance of activating each of its inactive outgoing neighbor $v$
 	with probability $p(u,v)$.
A node stays active after it is activated.
This process stops when no more nodes are activated.
We define {\em influence spread} of seed set $S$ under influence probability function $p$, 
	denoted $\sigma(S, p)$, 
	as the expected number of active nodes after the diffusion process ends.
As shown in \cite{kempe2003maximizing}, for any fixed $p$, $\sigma(S, p)$
is monotone (i.e., $\sigma(S, p) \leq \sigma(T, p)$ for any $S \subseteq T$) and submodular
(i.e., $\sigma(S \cup \{ v \}, p) - \sigma(S, p) \geq
\sigma(T \cup \{ v \}, p) - \sigma(T, p)$ for any $S \subseteq T$ and $v \in V$)
on its seed set parameter.
The next lemma further shows that for any fixed $S$, $\sigma(S, p)$ is monotone in $p$.
For two influence probability functions $p$ and $p'$ on graph $G=(V,E)$, we denote
	$p \le p'$ if for any $(u,v)\in E$, $p(u,v) \le p'(u,v)$.
We say that influence spread function $\sigma(S, p)$ is {\em monotone in $p$} if
	for any $p\le p'$, we have $\sigma(S,p) \le \sigma(S,p')$.

\begin{lemma} \label{lemma:monotone}
For any fixed seed set $S \subseteq V$, $\sigma(S, p)$ is monotone in $p$.
\end{lemma}

\begin{proof}[Proof sketch]
We use the following coupling method.
For any edge $(u,v)\in E$, we select a number $x(u,v)$ uniformly at random in $[0,1]$.
Then for any influence probability function $p$, we select edge $(u,v)$ as a {\em live edge} if
	$x(u,v) \le p(u,v)$ and otherwise it is a {\em blocked edge}.
All live edges form a random {\em live-edge graph} $G_L(p)$.
One can verify that $\sigma(S,p)$ is the expected value of the size of node set 
	reachable from $S$ in random graph $G_L(p)$.
Moreover, for $p$ and $p'$ such that $p \le p'$, one can verify that
	after fixing the random numbers $x(u,v)'s$, live-edge
	graph $G_L(p)$ is a subgraph of live-edge graph $G_L(p')$, and thus nodes reachable from $S$
	in $G_L(p)$ must be also reachable from $S$ in $G_L(p')$.
This implies that $\sigma(S,p) \le \sigma(S,p')$. \QED
\end{proof}

We remark that using a similar idea as above we
	could show that influence spread in the linear threshold (LT) model \cite{kempe2003maximizing}
	is also monotone in the edge weight parameter.

\subsection{Influence maximization}


Given a graph $G=(V,E)$, an influence probability function $p$, and a budget $k$, 
{\em influence maximization} is the task of selecting at most $k$ seed nodes such that the influence spread
	is maximized, i.e., finding set $\Seed=\Seed(k,p)$ such that
$$
\Seed(k,p) = \argmax_{S \subseteq V, |S| \leq k} \sigma(S, p).
$$




In \cite{kempe2003maximizing}, Kempe et al. show that the influence maximization problem is NP-hard
	in both the IC model and the LT model.
They propose the greedy approach for
	influence maximization, as shown in Algorithm~\ref{alg:greedy}.
Given influence probability function $p$, the {\em marginal influence (MI)} of a node $v$ under
	seed set $S$ is defined as
	$\MI(v | S, p) = \sigma(S \cup \{v\}, p)-\sigma(S, p)$, for any $v\in V$.
The greedy algorithm selects $k$ seeds in $k$ iterations, and in the $j$-th iteration 
	it selects a node $v_j$
	with the largest marginal influence under the current seed set $S_{j-1}$ and adds $v_j$ into $S_{j-1}$
	to obtain $S_j$.
Kempe et al. use Monte Carlo simulations to obtain accurate estimates on marginal influence $\MI(v | S, p)$, 
	and later Chen et al. show that indeed exact computation of influence spread $\sigma(S,p)$ or
	marginal influence $\MI(v | S, p)$ is \#P-hard \cite{wang2012scalable}.
The monotonicity and submodularity of $\sigma(S,p)$ in $S$ guarantees that 
	the greedy algorithm selects a seed set with approximation ratio $1-\frac{1}{e} - \varepsilon$, that is,
	it returns a seed set $\Seedg = \Seedg(k, p)$ such that
$$
\sigma(\Seedg, p) \ge \left(1-\frac{1}{e} - \varepsilon \right) \sigma(\Seed,p),
$$
for any small $\varepsilon > 0$, where $\varepsilon$ accommodates the inaccuracy in Monte Carlo
	estimations.

\begin{algorithm}[t]
\begin{algorithmic}[1]
 \REQUIRE $G=(V,E)$, $p$, $k$.
 \STATE $S_0 = \emptyset$
    \FOR{$j = 1,2,\cdots,k$}
        \STATE $v_j = \argmax_{v \in V\setminus S_{j-1}} \MI(v | S_{j-1}, p)$ 
       \STATE $S_j = S_{j-1} \cup \{v_j\}$
     \ENDFOR
 \RETURN $S_k$
\end{algorithmic}
\caption{Greedy algorithm.}
\label{alg:greedy}
\end{algorithm}

\vspace{-3mm}
\subsection{Topic-aware independent cascade model and topic-aware influence maximization}

Topic-aware independent cascade (TIC) model \cite{barbieri2013topic} is an extension
of the IC model to incorporate topic mixtures in any diffusion item.
Suppose there are $d$ base topics, and we use set notation $[d] = \{1,2,\cdots,d\}$ 
to denote topic $1,2, \cdots, d$.
We regard each diffusion item as a distribution of these topics.
Thus, any item can be expressed as a vector $I=(\lambda_1, \lambda_2, \dots, \lambda_{d})$
where $\forall i \in [d]$, $\lambda_i \in [0,1]$ and $\sum_{i \in [d]} \lambda_i=1$. 
We also refer $(\lambda_1, \lambda_2, \dots, \lambda_{d})$ as a {\em topic mixture}.
Given a directed social graph $G=(V,E)$, for any topic $i \in [d]$, 
influence probability on that topic is $p_i: V \times V \rightarrow [0,1]$, 
and for all $(u, v) \notin E$ or $u = v$, we assume $p_i(u, v) = 0$.
In the TIC model, the influence probability function $p$ for any diffusion item $I=(\lambda_1,\lambda_2,\dots,\lambda_{d})$
is defined as $p(u,v) = \sum_{i \in [d]} \lambda_i {p_i}(u,v)$, for all $ u,v \in V$ (or simply $p = \sum_{i \in [d]} \lambda_i {p_i}$). 
Then, the stochastic diffusion process and influence spread $\sigma(S, p)$ are exactly the same as
	defined in the IC 
	model by using the influence probability $p$ on edges.


Given a social graph $G$, base topics $[d]$, influence probability function $p_i$ for each base
	topic $i$, a budget $k$ and an item $I=(\lambda_1,\lambda_2,\dots,\lambda_{d})$, the {\em topic-aware
	influence maximization} is the task of finding optimal seeds $\Seed = \Seed(k, p) \subseteq V$,
	where $p = \sum_{i \in [d]} \lambda_i {p_i}$, 
	to maximize the influence spread, i.e., $\Seed = \argmax_{S \subseteq V, |S| \leq k} \sigma(S, p)$.


\section{Data Observation} \label{sec:obs}

\if 0
{
	There are relatively few studies on topic-aware influence analysis. 
	For our study, we are able to obtain two datasets including
		influence probabilities learned from raw data from two prior studies.
	Due to the space constraint, in this section we briefly describe the datasets and two
		key observations we obtain from them. 
	The complete data analysis is given in the supplement. 
	
	The first dataset is on social movie rating network Flixster \cite{barbieri2013topic}, which is
		an American social movie site for discovering new movies, learning about movies, and meeting others with similar tastes in movies. 
	The Flixster network represents users as nodes, and two users $u$ and $v$ are connected by a directed edge
		$(u,v)$ if they are friends both rating the same movie and $v$ rates the movie shortly later after $u$ does so.
	The network contains 29357 nodes, 425228 directed edges and 10 topics 
		\cite{barbieri2013topic}.
	We also obtain 11659 topic mixtures, from which we found that predominant ones are single topic
		or two-topic mixtures.
		
	The second dataset is on the  academic collaboration network Arnetminer \cite{tang2009social}, which
		is a free online service used to index and search academic social networks. 
	The Arnetminer network represents authors as nodes and two authors have an edge if they coauthored 
		a paper.
	The Arnetminer network contains 5114 nodes, 34334 directed edges and 8 topics, and all 8 topics
		are related to computer science, such as data mining, machine learning, information retrieval, etc.

	For the two datasets, we would like to investigate how different topics overlap on edges and nodes.
	To do so, we define the following
	coefficients to characterize the properties of a social graph.
	
	
	
	Given threshold $\theta\geq 0$, for every topic $i$, denote edge set $\efilter_i(\theta) =  
		\{ (u,v) \in E\,|\, {p_i}(u,v) > \theta \}$, and node set $\vfilter_i(\theta) = \{v\in V \,|\, \sum_{u:(v,u)\in E} p_i(v,u)+ \sum_{u:(u,v)\in E} p_i(u,v) > \theta\}$.
	For topics $i$ and $j$, we define {\em edge overlap coefficient} as
		$\reo_{ij}(\theta) =
		\frac{|\efilter_i(\theta) \cap \efilter_j(\theta)|}
		{\min\{|\efilter_i(\theta)|, |\efilter_j(\theta)|\}}$, and
		{\em node overlap coefficient} as $ \rvo_{ij}(\theta) =
		\frac{|\vfilter_i(\theta) \cap \vfilter_j(\theta)|}
			{\min\{|\vfilter_i(\theta)|, |\vfilter_j(\theta)|\}}$.
	If $\theta$ is small and the overlap coefficient is small, it means that the two topics are fairly
		separated in the network.
	In particular, we say that the network is {\em fully separable} for topics $i$ and $j$
		if $\rvo_{ij}(0) = 0$,
		and it is fully separable for all topics if 
		$\rvo_{ij}(0) = 0$ for any pair $i$ and $j$ with $i\ne j$.
	Then we apply the above coefficients to the Flixster and Arnetminer datasets.

	\begin{table}[t]
	\setlength{\tabcolsep}{10pt}
	 \centering
	  \caption{Overlap coefficient statistics for all topic pairs} \label{tab:overlap}
	\begin{tabular}{cccc}
		\toprule
				& min	& mean	& max \\
		\midrule
		Arnetminer: {\small  $\reo_{ij}(0.1)$}  & 0  & 0.0041 & 0.022  \\
		Arnetminer: {\small $\rvo_{ij}(0.1)$}  & 0  & 0.0236 & 0.108 \\
		Flixster: {\small $\reo_{ij}(0.1)$} & 0.25 & 0.4058 & 0.57 \\
		Flixster: {\small $\reo_{ij}(0.3)$} & 0.13 & 0.2662 & 0.47 \\
		Flixster: {\small $\rvo_{ij}(0.5)$} & 0.63 & 0.836 & 0.95 \\
		Flixster: {\small $\rvo_{ij}(5.0)$} & 0.46 & 0.734 & 0.91 \\
		\bottomrule
	\end{tabular}
	\label{tab:summary_data}
	\end{table}

	Table~\ref{tab:overlap} shows the edge and node overlap coefficient statistics among all pairs of
		topics for the two dataset.
	We can see that Arnetminer network has fairly separate topics on both nodes and edges, while
		Flixter network have significant topic overlaps. 
	This may be explained by that in an academic network most researchers only work in one research area,
		but in a movie network many users are interested in more than one type of movies:
	
	\begin{observation}
	Topic separation in terms of influence probabilities is network dependent.
	\label{obv:topic}
	\end{observation}
	
	\begin{table}[t]
	  \centering
	\caption{Percentage of seeds in topic mixture that are also seeds of constituent topics.}
	{\scriptsize
	\begin{tabular}{cccc}
		\toprule
			& Arnetminer	& Flixster (random)	& Flixster (Dirichlet)\\
		\midrule
		Seeds overlap	& 94.80\%	& 81.16\% &	85.24\% \\
		\bottomrule
	\end{tabular}
	}
	\label{tab:percentage}%
	\end{table}%
	
	Our second observation is more directly related to influence maximization.
	We would like to see if seeds selected by the greedy algorithm for a topic mixture are likely coming
		from top seeds for each individual topic.
	%
	To check the source of seeds, we randomly generate 50 mixtures of two topics for both Arnetminer and
		Flixster, and use the greedy algorithm to select seeds for the mixture and the constituent topics.
	We then check the percentage of seeds in the mixture that is also in the constituent topics.
	Table~\ref{tab:percentage} shows our test results (Flixster (Dirhilect) is the result using a Dirichlet
	distribution to generate topic mixtures, see Section~\ref{sec:experiments} for more details):
	
	\begin{observation}
	Most seeds for topic mixtures come from the seeds of constituent topics, in both Arnetminer and
		Flixster networks.
	\label{obv:seed}
	\end{observation}
	
	For Arnetminer, it is likely due to the topic separation as observed in Table~\ref{tab:overlap}.
	For Flixster, even though topics have significant overlaps, 
		these overlaps may result in many shared seeds between topics, which would also contribute
		as top seeds for topic mixtures.
	
	
	\section{Detailed Data Analysis} \label{sect:appendix-detail-obs}
}
\fi

There are relatively few studies on topic-aware influence analysis. 
For our study, we are able to obtain datasets from two prior studies,
	one is on social movie rating network Flixster \cite{barbieri2013topic} and the other is
	on academic collaboration network Arnetminer \cite{tang2009social}.
In this section, we describe these two datasets, and present
	statistical observations on these datasets, which will help us in our algorithm design.

\subsection{Data description}
We obtain two real-world datasets, Flixster and Arnetminer, which include influence analysis results
	from their respective raw data, from the authors of the prior studies \cite{barbieri2013topic,tang2009social}.

Flixster\footnote{www.flixster.com} is an American social movie site for discovering new movies, learning about movies, and meeting others with similar tastes in movies. 
The raw data in Flixster dataset is the action traces of movie ratings of users.
The Flixster network represents users as nodes, and two users $u$ and $v$ are connected by a directed edge
	$(u,v)$ if they are friends both rating the same movie and $v$ rates the movie shortly later after $u$ does so.
The network contains 29357 nodes, 425228 directed edges and 10 topics 
	\cite{barbieri2013topic}. 
Barbieri et al. \cite{barbieri2013topic} use their proposed TIC model and apply maximum likelihood
	estimation method on the action traces to obtain influence probabilities on edges for all 10 topics.
We found that there are a disproportionate number of edges with influence probabilities higher than
	$0.99$, which is due to the lack of sufficient samplings of propagation events over these edges.
We smoothen these influence probability values by changing all the probabilities larger than $0.99$ 
	to random numbers according to the probability distribution of all the probabilities smaller than 
	$0.99$.
We also obtain 11659 topic mixtures, and demonstrate the distribution of the number of
	topics in item mixtures in Table~\ref{tab:dist_mixture}.
We eliminate individual probabilities that are too weak ($\forall i\in[d], \lambda_i < 0.01$).
In general, most items are on a single topic only, with some two-topic mixtures. Mixtures with three or four topics are already rare and there are no items with five or more topics.

%
\begin{table}[htbp]
  \centering
  \caption{Distribution of topic numbers of mixture items in Flixster}
  {\scriptsize
    \begin{tabular}{rccccc}
    \toprule
    \# Mixed topics & 1     & 2     & 3     & 4     & 5 \\
    \midrule
    \multicolumn{1}{l}{\# Samples} & 11285 & 354   & 18    & 2     & 0 \\
    \multicolumn{1}{l}{(Percentage)} & (96.79\%) & (3.04\%) & (0.15\%) & (0.02\%) & (0.00\%) \\
    \bottomrule
    \end{tabular}%
  }
  \label{tab:dist_mixture}%
\end{table}%

Arnetminer\footnote{arnetminer.org} is a free online service used to index and search academic social networks. 
The Arnetminer network represents authors as nodes and two authors have an edge if they coauthored 
	a paper.
The raw data in the Arnetminer dataset is not the action traces but the topic distributions of all nodes
	and the network structure \cite{tang2009social}.
Tang et al. apply factor graph analysis to obtain influence probabilities on edges
	from node topic distributions and the network structure \cite{tang2009social}.
The resulting network contains 5114 nodes, 34334 directed edges and 8 topics, and all 8 topics
	are related to computer science, such as data mining, machine learning, information retrieval, etc.
Mixed items propagated in such academic networks could be ideas or papers from related topic 
	mixtures, although there are no raw data of topic mixtures available in Arnetminer.

Tables~\ref{tab:stat1} and~\ref{tab:stat2} provide statistics for the learned influence probabilities
	for every topic in Arnetminer and Flixster dataset.
Column ``nonzero'' provides the number of edges having nonzero probabilities on the specific topic.
Other columns are mean, standard deviation, 25-percentile, 50-percentile (median), and 75-percentile
	of the probabilities among the nonzero entries.
The basic statistics show similar behavior between the two datasets, such as mean probabilities are
	mostly between $0.1$ and $0.2$, standard deviations are mostly between $0.1$ to $0.3$, etc.
Comparing among different topics, even though the means and other statistics are similar to one another,
	the number of nonzero edges may have up to 10 fold difference.
This indicates that some topics are more likely to propagate than others.


\begin{table}[t]
  \centering
  \caption{Influence probability statistics of Arnetminer} \label{tab:stat1}
  {\scriptsize
    \begin{tabular}{ccccccc}
    \addlinespace
    \toprule
    Topic & nonzero & mean & deviation & 25\%  & 50\%  & 75\% \\
    \midrule
    1     & 3355  & 0.175 & 0.230 & 0.023 & 0.075 & 0.229 \\
    2     & 13331 & 0.093 & 0.154 & 0.010 & 0.031 & 0.100 \\
    3     & 3821  & 0.158 & 0.214 & 0.020 & 0.065 & 0.201 \\
    4     & 1537  & 0.217 & 0.243 & 0.038 & 0.120 & 0.316 \\
    5     & 2468  & 0.197 & 0.262 & 0.018 & 0.080 & 0.266 \\
    6     & 1236  & 0.240 & 0.273 & 0.034 & 0.122 & 0.353 \\
    7     & 4439  & 0.145 & 0.222 & 0.011 & 0.046 & 0.177 \\
    8     & 3439  & 0.162 & 0.220 & 0.022 & 0.069 & 0.201 \\
    \bottomrule
    \end{tabular}%
   }
  \label{tab:stat_arnet}%
\end{table}%

\begin{table}[t]
  \centering
  \caption{Influence probability statistics  of Flixster} \label{tab:stat2}
  {\scriptsize
    \begin{tabular}{ccccccc}
    \addlinespace
    \toprule
    Topic & nonzero & mean & deviation & 25\%  & 50\%  & 75\% \\
    \midrule
    1     & 54032 & 0.173 & 0.215 & 1.00E-04 & 0.086 & 0.264 \\
    2     & 84322 & 0.172 & 0.227 & 4.36E-05 & 0.067 & 0.260 \\
    3     & 231807 & 0.089 & 0.146 & 1.18E-04 & 0.024 & 0.112 \\
    4     & 35394 & 0.162 & 0.226 & 6.78E-03 & 0.050 & 0.250 \\
    5     & 118125 & 0.097 & 0.141 & 2.45E-03 & 0.037 & 0.131 \\
    6     & 37489 & 0.090 & 0.142 & 6.85E-03 & 0.033 & 0.100 \\
    7     & 84716 & 0.166 & 0.230 & 3.12E-05 & 0.050 & 0.250 \\
    8     & 149140 & 0.097 & 0.145 & 9.01E-04 & 0.036 & 0.131 \\
    9     & 152181 & 0.103 & 0.158 & 2.14E-04 & 0.032 & 0.140 \\
    10    & 139335 & 0.159 & 0.235 & 3.27E-05 & 0.029 & 0.250 \\
    \bottomrule
    \end{tabular}%
  }
  \label{tab:stat_flix}%
\end{table}%

\subsection{Topic separation on edges and nodes}
For the two datasets, we would like to investigate how different topics overlap on edges and nodes.
To do so, we define the following
coefficients to characterize the properties of a social graph.



Given threshold $\theta\geq 0$, for every topic $i$, denote edge set $\efilter_i(\theta) =  
	\{ (u,v) \in E\,|\, {p_i}(u,v) > \theta \}$, and node set $\vfilter_i(\theta) = \{v\in V \,|\, \sum_{u:(v,u)\in E} p_i(v,u)+ \sum_{u:(u,v)\in E} p_i(u,v) > \theta\}$.
For topics $i$ and $j$, we define {\em edge overlap coefficient} as
	$\reo_{ij}(\theta) =
	\frac{|\efilter_i(\theta) \cap \efilter_j(\theta)|}
	{\min\{|\efilter_i(\theta)|, |\efilter_j(\theta)|\}}$, and
	{\em node overlap coefficient} as $ \rvo_{ij}(\theta) =
	\frac{|\vfilter_i(\theta) \cap \vfilter_j(\theta)|}
		{\min\{|\vfilter_i(\theta)|, |\vfilter_j(\theta)|\}}$.
If $\theta$ is small and the overlap coefficient is small, it means that the two topics are fairly
	separated in the network.
In particular, we say that the network is {\em fully separable} for topics $i$ and $j$
	if $\rvo_{ij}(0) = 0$,
	and it is fully separable for all topics if 
	$\rvo_{ij}(0) = 0$ for any pair $i$ and $j$ with $i\ne j$.
Then we apply the above coefficients to the Flixster and Arnetminer datasets. 

Table~\ref{tab:sep_armetminer} shows the edge and node overlap coefficients with
	threshold $\theta=0.1$ for every pair of topics in the Arnetminer dataset.
Correlating with Table~\ref{tab:stat2}, we see that $\theta=0.1$ is around the mean value for all topics.
Thus it is a reasonably small value especially for the node overlap coefficients, which is about 
	aggregated probability of all edges incident to a node.
A clear indication in Table~\ref{tab:sep_armetminer} is that topic overlap on both edges and nodes are very small
	in Arnetminer, with most node overlap coefficients less than $5\%$.
We believe that this is because in academic collaboration network, most researchers work on one specific
	research area, and only a small number of researchers work across different research areas.

Tables~\ref{tab:edge_flixster} and~\ref{tab:node_flixster3} show the edge and node overlap coefficients
	for the Flixster dataset.
Different from the Arnetminer dataset, both edges and nodes have significant overlaps.
For edge overlaps, even with threshold $\theta=0.3$, all topic pairs have edge overlap between $15\%$ and $40\%$.
For node overlap, we test the threshold for both $0.5$ to $5$, but the overlap coefficients do not
	significantly change: at $\theta=5$, most pairs still have above $60\%$ and up to $89\%$ 
	overlap.
We think that this could be explained by the nature of Flixster, which is a movie rating site.
Most users are interested in multiple categories of movies, and their influence to their friends
	are also likely to be across multiple categories.
It is interesting to see that, even though the per-topic statistics between Arnetminer and Flixster 
	are similar, they show quite different cross-topic overlap behaviors, which can be explained by
	the nature of the networks.
This could be an independent research topic for further investigations on the influence behaviors among
	different topics.

\begin{table}[t]
  \centering
  \caption{Edge and Node overlap coefficients on Arnetminer. The upper black triangle represents edge overlap coefficient when $\theta=0.1$. The entry on row $i$, column $j$ represents $\reo_{ij}(0.1)$; the lower blue triangle represents node overlap coefficient when $\theta=0.1$. The entry on row $i$, column $j$ represents $\rvo_{ij}(0.1)$.}
  {\scriptsize
    \begin{tabular}{cccccccc}
		- & 0.017 & 0.002 & 0.000 & 0.005 & 0.006 & 0.000 & 0.022 \\
	    \color{blue}0.068 & - & 0.001 & 0.004 & 0.001 & 0.001 & 0.002 & 0.000 \\
	    \color{blue}0.018 & \color{blue}0.014 & - & 0.000 & 0.000 & 0.001 & 0.000 & 0.000 \\
	    \color{blue}0.002 & \color{blue}0.029 & \color{blue}0.000 & - & 0.000 & 0.011 & 0.017 & 0.000 \\
	    \color{blue}0.025 & \color{blue}0.005 & \color{blue}0.005 & \color{blue}0.000 & - & 0.000 & 0.000 & 0.015 \\
	    \color{blue}0.054 & \color{blue}0.049 & \color{blue}0.049 & \color{blue}0.011 & \color{blue}0.000 & - & 0.009 & 0.001 \\
	    \color{blue}0.006 & \color{blue}0.025 & \color{blue}0.003 & \color{blue}0.017 & \color{blue}0.007 & \color{blue}0.063 & - & 0.000 \\
	    \color{blue}0.108 & \color{blue}0.001 & \color{blue}0.008 & \color{blue}0.000 & \color{blue}0.079 & \color{blue}0.011 & \color{blue}0.004 & - \\

    \end{tabular}%
   }
  \label{tab:sep_armetminer}%
\end{table}%

\begin{table}[t]
  \centering
  \caption{Edge overlap coefficients on Flixster. The upper black triangle represents edge overlap coefficient when $\theta=0.1$. The entry on row $i$, column $j$ represents $\reo_{ij}(0.1)$; the lower blue triangle represents edge overlap coefficient when $\theta=0.3$. The entry on row $i$, column $j$ represents $\reo_{ij}(0.3)$.}
   {\scriptsize
    \begin{tabular}{cccccccccc}
 	  -  & 0.33  & 0.49  & 0.27  & 0.36  & 0.35  & 0.35  & 0.42  & 0.43  & 0.39 \\
      \color{blue}0.22  & -  & 0.48  & 0.33  & 0.31  & 0.41  & 0.31  & 0.36  & 0.38  & 0.39 \\
      \color{blue}0.28  & \color{blue}0.26  & -  & 0.46  & 0.50  & 0.48  & 0.55  & 0.50  & 0.57  & 0.52 \\
     \color{blue} 0.15  & \color{blue}0.19  & \color{blue}0.22  & -  & 0.33  & 0.25  & 0.31  & 0.37  & 0.38  & 0.38 \\
      \color{blue}0.20  & \color{blue}0.25  & \color{blue}0.34  & \color{blue}0.13  & -  & 0.52  & 0.30  & 0.46  & 0.45  & 0.37 \\
      \color{blue}0.23  & \color{blue}0.29  & \color{blue}0.28  & \color{blue}0.16  & \color{blue}0.31  & -  & 0.36  & 0.50  & 0.47  & 0.38 \\
      \color{blue}0.25  & \color{blue}0.21  & \color{blue}0.34  & \color{blue}0.18  & \color{blue}0.24  & \color{blue}0.25  & -  & 0.37  & 0.43  & 0.46 \\
      \color{blue}0.21  & \color{blue}0.24  & \color{blue}0.38  & \color{blue}0.15  & \color{blue}0.31  & \color{blue}0.29  & \color{blue}0.25  & -  & 0.44  & 0.37 \\
      \color{blue}0.24  & \color{blue}0.24  & \color{blue}0.44  & \color{blue}0.17  & \color{blue}0.32  & \color{blue}0.28  & \color{blue}0.29  & \color{blue}0.35  & -  & 0.42 \\
      \color{blue}0.28  & \color{blue}0.27  & \color{blue}0.47  & \color{blue}0.23  & \color{blue}0.29  & \color{blue}0.26  & \color{blue}0.35  & \color{blue}0.32  & \color{blue}0.37  & - \\

	\end{tabular}%
  }
  \label{tab:edge_flixster}%
\end{table}%

\begin{table}[t]
  \centering
  \caption{Node overlap coefficients on Flixster. The upper black triangle represents node overlap coefficient when $\theta=0.5$. The entry on row $i$, column $j$ represents $\rvo_{ij}(0.5)$; the lower blue triangle represents node overlap coefficient when $\theta=5.0$. The entry on row $i$, column $j$ represents $\rvo_{ij}(5.0)$.}
    {\scriptsize
    \begin{tabular}{cccccccccc}
	-  & 0.79  & 0.91  & 0.68  & 0.76  & 0.81  & 0.77  & 0.83  & 0.85  & 0.87 \\
    \color{blue}0.69  & -  & 0.88  & 0.82  & 0.76  & 0.88  & 0.75  & 0.74  & 0.77  & 0.84 \\
    \color{blue}0.83  & \color{blue}0.64  & -  & 0.93  & 0.92  & 0.95  & 0.91  & 0.92  & 0.91  & 0.87 \\
    \color{blue}0.53  & \color{blue}0.67  & \color{blue}0.75  & -  & 0.77  & 0.63  & 0.78  & 0.83  & 0.85  & 0.89 \\
    \color{blue}0.58  & \color{blue}0.70  & \color{blue}0.87  & \color{blue}0.50  & -  & 0.90  & 0.73  & 0.84  & 0.85  & 0.85 \\
    \color{blue}0.76  & \color{blue}0.83  & \color{blue}0.86  & \color{blue}0.46  & \color{blue}0.91  & -  & 0.86  & 0.93  & 0.92  & 0.91 \\
    \color{blue}0.71  & \color{blue}0.53  & \color{blue}0.72  & \color{blue}0.62  & \color{blue}0.72  & \color{blue}0.78  & -  & 0.77  & 0.81  & 0.88 \\
    \color{blue}0.72  & \color{blue}0.57  & \color{blue}0.82  & \color{blue}0.60  & \color{blue}0.85  & \color{blue}0.89  & \color{blue}0.59  & -  & 0.83  & 0.84 \\
    \color{blue}0.74  & \color{blue}0.53  & \color{blue}0.84  & \color{blue}0.62  & \color{blue}0.82  & \color{blue}0.89  & \color{blue}0.63  & \color{blue}0.73  & -  & 0.83 \\
    \color{blue}0.89  & \color{blue}0.74  & \color{blue}0.81  & \color{blue}0.83  & \color{blue}0.88  & \color{blue}0.89  & \color{blue}0.82  & \color{blue}0.82  & \color{blue}0.84  & - \\
    \end{tabular}%
    }
  \label{tab:node_flixster3}%
\end{table}

\begin{table}[t]
\setlength{\tabcolsep}{10pt}
\centering
\caption{Overlap coefficient statistics for all topic pairs} \label{tab:overlap}
\begin{tabular}{cccc}
	\toprule
			& min	& mean	& max \\
	\midrule
	Arnetminer: {\small  $\reo_{ij}(0.1)$}  & 0  & 0.0041 & 0.022  \\
	Arnetminer: {\small $\rvo_{ij}(0.1)$}  & 0  & 0.0236 & 0.108 \\
	Flixster: {\small $\reo_{ij}(0.1)$} & 0.25 & 0.4058 & 0.57 \\
	Flixster: {\small $\reo_{ij}(0.3)$} & 0.13 & 0.2662 & 0.47 \\
	Flixster: {\small $\rvo_{ij}(0.5)$} & 0.63 & 0.836 & 0.95 \\
	Flixster: {\small $\rvo_{ij}(5.0)$} & 0.46 & 0.734 & 0.91 \\
	\bottomrule
\end{tabular}
\end{table}

Table~\ref{tab:overlap} summarizes the edge and node overlap coefficient statistics among all pairs of
	topics for the two datasets.
We can see that Arnetminer network has fairly separate topics on both nodes and edges, while
	Flixter network have significant topic overlaps. 
This may be explained by that in an academic network most researchers only work in one research area,
	but in a movie network many users are interested in more than one type of movies. Therefore, our first observation is: 
\begin{observation}
Topic separation in terms of influence probabilities is network dependent.
In the Arnetminer network, topics are mostly separated among different edges and nodes in the network, while
	in the Flixster network there are significant overlaps on topics among nodes and edges.
\label{obv:topic}
\end{observation}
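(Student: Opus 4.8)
Observation~\ref{obv:topic} is an empirical claim rather than an algebraic identity, so the plan is to substantiate it quantitatively using the overlap coefficients $\reo_{ij}(\theta)$ and $\rvo_{ij}(\theta)$ defined just above, and then to argue that the computed values cleanly split the two networks into a ``separated'' regime and an ``overlapping'' regime. First I would, for each network and each topic $i$, form the thresholded edge set $\efilter_i(\theta)$ and node set $\vfilter_i(\theta)$, and then compute the pairwise coefficients for every unordered pair $\{i,j\}$ with $i\neq j$. Aggregating these over all $\binom{d}{2}$ pairs into the min/mean/max summary of Table~\ref{tab:overlap} yields a compact object from which the claim can be read off directly.

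The step that needs care is threshold calibration. A single fixed $\theta$ is not automatically comparable across topics: Tables~\ref{tab:stat1} and~\ref{tab:stat2} show that the per-topic probability statistics are similar (means clustered in $[0.1,0.2]$) while the number of nonzero edges differs by up to a factor of ten. I would therefore anchor the edge threshold near this common mean, taking $\theta=0.1$ and a stronger $\theta=0.3$ as a robustness check; for nodes, whose coefficient aggregates the probabilities of all incident edges, I would pick $\theta$ large enough to retain only genuinely active nodes, e.g.\ $\theta\in\{0.5,5.0\}$. Reporting two thresholds per quantity is precisely what lets me show the conclusion is not an artifact of the cutoff, since the overlaps must remain small on Arnetminer and large on Flixster as $\theta$ varies.

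The conclusion then follows by inspection of the aggregated statistics: on Arnetminer both $\reo_{ij}$ and $\rvo_{ij}$ are uniformly small (node overlaps mostly below $5\%$, with maximum near $10\%$), whereas on Flixster even the minimum edge overlap at $\theta=0.3$ exceeds $13\%$ and node overlaps at $\theta=5.0$ stay above $46\%$. These two ranges are disjoint with a wide gap, so the coefficients themselves certify the qualitative dichotomy, and the network dependence is attributed to the differing nature of the two platforms (single-area researchers versus multi-genre movie viewers).

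The main obstacle I expect is the normalization by $\min\{|\efilter_i(\theta)|,|\efilter_j(\theta)|\}$ together with the threshold choice. Because topic sizes differ so much, a symmetric denominator such as the union size could artificially deflate the overlap whenever a small topic is paired with a large one, masking the contrast I want to expose. Using the minimum of the two set sizes makes each coefficient a fair ``fraction of the smaller topic that is shared'' measure, and the crux of the argument is justifying that this denominator, rather than the cutoff, is what makes the separated/overlapping dichotomy robust across the 10-fold size disparity.
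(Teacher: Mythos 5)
Your proposal is correct and follows essentially the same route as the paper: you compute the same overlap coefficients $\reo_{ij}(\theta)$ and $\rvo_{ij}(\theta)$ at the same thresholds ($\theta=0.1,0.3$ for edges, $\theta=0.5,5.0$ for nodes), aggregate them as in Table~\ref{tab:overlap}, and draw the same separated-versus-overlapping dichotomy with the same interpretation (single-area researchers versus multi-genre movie viewers). Your added justification of the $\min$-denominator normalization against the 10-fold topic-size disparity is a reasonable gloss on the paper's definition, not a different method.
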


\subsection{Sources of seeds in the mixture}
Our second observation is more directly related to influence maximization.
We would like to see if seeds selected by the greedy algorithm for a topic mixture are likely coming
	from top seeds for each individual topic.
Intuitively, it seems reasonable to assume that top influencers for a topic mixture are likely from 
	top influencers in their constituent topics.

\begin{table}[t]
  \centering
\caption{Percentage of seeds in topic mixture that are also seeds of constituent topics.}
{\scriptsize
\begin{tabular}{cccc}
	\toprule
		& Arnetminer	& Flixster (random)	& Flixster (Dirichlet)\\
	\midrule
	Seeds overlap	& 94.80\%	& 81.16\% &	85.24\% \\
	\bottomrule
\end{tabular}
}
\label{tab:percentage}%
\end{table}%

To check the source of seeds, we randomly generate 50 mixtures of two topics for both Arnetminer and
	Flixster, and use the greedy algorithm to select seeds for the mixture and the constituent topics.
We then check the percentage of seeds in the mixture that is also in the constituent topics.
Table~\ref{tab:percentage}
shows our test results (Flixster (Dirhilect) is the result using a Dirichlet
distribution to generate topic mixtures, see Section~\ref{sec:experiments}
for more details).
Our observation below matches our intuition:

\begin{observation}
Most seeds for topic mixtures come from the seeds of constituent topics, in both Arnetminer and
	Flixster networks.
\label{obv:seed}
\end{observation}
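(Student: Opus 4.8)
The plan is to establish this observation empirically, since the statement is a quantitative claim about the greedy algorithm's behavior on two specific networks rather than a universal theorem, and then to supplement it with a theoretical argument covering the extreme regimes identified in Observation~\ref{obv:topic}. Concretely, for each dataset I would sample a collection of two-topic mixtures $I=(\lambda_1,\dots,\lambda_d)$, run the greedy algorithm (Algorithm~\ref{alg:greedy}) to obtain the mixture seed set $\Seedg(k,p)$ with $p=\sum_{i\in[d]}\lambda_i p_i$, as well as the single-topic seed sets $\Seedg(k,p_i)$ for each constituent topic $i$ with $\lambda_i>0$, and measure the fraction $|\Seedg(k,p)\cap(\bigcup_i \Seedg(k,p_i))|/|\Seedg(k,p)|$. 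Averaging this fraction over the sampled mixtures yields the percentages reported in Table~\ref{tab:percentage}, and the claim ``most seeds come from constituent topics'' is precisely the statement that this average is large (above 80\% on both networks).

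For the theoretical half I would treat the two networks through the two regimes of Observation~\ref{obv:topic}. In the fully separable case ($\rvo_{ij}(0)=0$ for all $i\neq j$), modeling Arnetminer, each node carries nonzero incident probability on at most one topic, so the node set partitions into topic supports $V_1,V_2$ plus nodes with zero influence everywhere. Under the coupling of Lemma~\ref{lemma:monotone}, the live-edge graph $G_L(p)$ then splits into two node-disjoint subgraphs, and the spread decomposes additively as $\sigma(S,p)=\sigma(S\cap V_1,\lambda_1 p_1)+\sigma(S\cap V_2,\lambda_2 p_2)$ (up to the trivially activated isolated seeds). Optimizing the mixture thus reduces to allocating the budget $k$ between two independent single-topic subproblems, so every mixture seed is a top seed of some constituent topic. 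The one caveat is that the mixture scales probabilities to $\lambda_i p_i$ rather than using $p_i$, and this scaling can mildly reorder within-topic greedy choices; this is exactly what I would use to explain why the Arnetminer overlap is $94.80\%$ rather than an exact $100\%$.

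In the high-overlap regime modeling Flixster no such decomposition holds, so I would argue heuristically instead. The large node overlap coefficients $\rvo_{ij}$ in Table~\ref{tab:node_flixster3} say that the high-aggregate-influence nodes are largely shared across topics. Since greedy seeds are drawn from exactly these high-influence nodes, a node with large marginal influence $\MI(v\mid S,p)$ for the mixture $p=\sum_i\lambda_i p_i$ is, by monotonicity (Lemma~\ref{lemma:monotone}), prevented from having low influence on every individual topic: its mixture influence is bounded in terms of its $\lambda$-weighted per-topic influences, so a mixture seed tends to be a top seed for at least one constituent $p_i$ as well, contributing the observed high overlap.

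The main obstacle is that the general statement resists a clean proof. Exact influence spread and marginal influence are \#P-hard to compute, so the greedy choices can only be characterized through Monte Carlo estimates, and submodularity makes greedy selection path-dependent: whether $v$ enters $\Seedg(k,p)$ depends on the entire prefix already chosen rather than on any single-node quantity. Consequently the overlap regime admits no exact additive decomposition, and the strongest rigorous statement available is the full-separability argument above, with the overlapping case resting on the monotonicity-based heuristic together with the empirical measurement. Turning ``most'' into a provably large fraction in the overlapping case would require quantitative control over how the marginal-influence ordering is perturbed by mixing, which the \#P-hardness and path-dependence of greedy make intractable in general.
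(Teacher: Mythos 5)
Your proposal matches the paper's treatment of this observation: the paper supports it purely empirically, by randomly generating 50 two-topic mixtures per network, running the greedy algorithm on each mixture and on its constituent topics, and reporting the overlap percentages of Table~\ref{tab:percentage} (94.80\% for Arnetminer, 81.16\% and 85.24\% for Flixster) --- exactly the measurement you describe --- followed by the same qualitative explanation you give (topic separation for Arnetminer; shared high-influence nodes across overlapping topics for Flixster). Your supplementary theoretical sketch for the fully separable regime is not part of the paper's justification of the observation itself, but it accurately anticipates the paper's own later formalization in Lemma~\ref{lemma:separable} and Theorem~\ref{thm:MIS}, so there is no substantive divergence.
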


For Arnetminer, it is likely due to the topic separation as observed in Table~\ref{tab:sep_armetminer}.
For Flixster, even though topics have significant overlaps, 
	these overlaps may result in many shared seeds between topics, which would also contribute
	as top seeds for topic mixtures.



\section{Preprocessing Based Algorithms} \label{sec:algo}
Topic-aware influence maximization can be solved by using existing influence maximization algorithms
	such as the ones in \cite{kempe2003maximizing,wang2012scalable}: when a query on 
	an item $I = (\lambda_1, \lambda_2, \cdots, \lambda_d)$ comes, the algorithm first computes
	the mixed influence probability function $p = \sum_j \lambda_j p_j$, and then applies existing
	algorithms using parameter $p$.
This, however, means that for each topic mixture influence maximization has to be carried out from scratch,
	which could be inefficient in large-scale networks.

In this section, motivated by observations made in Section~\ref{sec:obs},
	we introduce two preprocessing based algorithms that cover different design choices.
The first algorithm Best Topic Selection focuses on minimizing online processing time,
	and the second one MIS uses pre-computed marginal influence to achieve both fast online processing
	and competitive influence spread.
For convenience, we consider the budget $k$ as fixed in our algorithms, but we could extend the 
	algorithms  
	to consider multiple $k$ values in preprocessing.

\subsection{Best Topic Selection (BTS) algorithm}
The idea of our first algorithm is to minimize online processing by simply selecting a seed set
	for one of the constituent topics in the topic mixture that has the best influence performance,
	and thus we call it Best Topic Selection (BTS) algorithm.
More specifically, given an item $I = (\lambda_1, \lambda_2, \cdots, \lambda_{d})$, 
	if we have pre-computed the seed set $\Seedg_i = \Seedg(k, \lambda p_i)$ via the greedy algorithm for each
	topic $i$, then we would simply use the seed set $\Seedg_{i'}$ that gives the best influence spread,
	i.e., $i' = \argmax_{i \in [d]} \sigma(\Seedg_i, \lambda_i p_i)$.
However, in the preprocessing stage, the topic mixture 
	$(\lambda_1, \lambda_2, \cdots, \lambda_{d})$ is not guaranteed to be pre-computed exactly.
To deal with this issue, we pre-compute influence spread for a number of landmark points for each topic,
	and use rounding method in online processing to complete seed selection, as we explain
	in more detail now.

Denote constant set $\Lambda = \{\clambda_0, \clambda_1, \clambda_2, \cdots, \clambda_m\}$ as a
	set of {\em landmarks}, 
where $0 = \clambda_0 < \clambda_1 < \cdots < \clambda_m = 1$.
For each $\lambda \in \Lambda$ and  each topic $i \in [d]$,
	we pre-compute $\Seedg(k, \lambda p_i)$ and $\sigma(\Seedg(k, \lambda p_i), \lambda p_i)$ in the preprocessing stage, and store these values for online processing.
In our experiments, we use uniformly selected landmarks and show that they are good enough for influence
	maximization.
More sophisticated landmark selection method may be applied, 
	such as the machine learning based method in~\cite{aslayonline}.

We define two rounding notations that return one of the neighboring landmarks in $\Lambda = \{\clambda_0, \clambda_1, \cdots, \clambda_m\}$:
for any $\lambda \in [0,1]$, $\ulambda$ is denoted as
rounding $\lambda$ down to $\clambda_j$ where $\clambda_j \leq \lambda < \clambda_{j+1}$
and $\clambda_j, \clambda_{j+1} \in \Lambda$,
and  $\olambda$ as rounding up to $\clambda_{j+1}$ where
$\clambda_j < \lambda \leq \clambda_{j+1}$ and $\clambda_j, \clambda_{j+1} \in \Lambda$.

Given 
$I = (\lambda_1, \lambda_2, \cdots, \lambda_{d})$,
let $D^+_I = \{i\in [d] \,|\, \lambda_i > 0 \}$.
With the pre-computed $\Seedg(k, \lambda p_i)$ and $\sigma(\Seedg(k, \lambda p_i), \lambda p_i)$
	for every $\lambda \in \Lambda$ and every topic $i$, the BTS algorithm is given in Algorithm~\ref{alg:top}.
The algorithm basically rounds down
	the mixing coefficient on every topic to $(\ulambda_1, \cdots, \ulambda_d)$, and then
	returns the seed set $\Seedg(k, \ulambda_{i'} p_{i'})$ that gives the largest influence spread at the round-down
	landmarks: $i' = \argmax_{i \in D^+_I} \sigma(\Seedg(k, \ulambda_i p_{i}), \ulambda_i p_i)$.

\begin{algorithm}[t]
\begin{algorithmic}[1]
	\REQUIRE $G=(V,E)$, $k$, $\{p_i\,|\, i\in [d]\}$, $I = (\lambda_1, \cdots, \lambda_{d})$, $\Lambda$,
				$\Seedg(k, \lambda p_i)$ and $\sigma(\Seedg(k, \lambda p_i), \lambda p_i)$, $\forall \lambda \in \Lambda, \forall i\in [d]$.
	\STATE $I' = (\ulambda_1, \cdots, \ulambda_d)$
	\STATE $i' = \argmax_{i \in D^+_I} \sigma(\Seedg(k, \ulambda_i p_i), \ulambda_i p_i)$ \label{line:topselect}
	\RETURN $\Seedg(k, \ulambda_{i'} p_{i'})$
\end{algorithmic}
	\caption{Best Topic Selection (BTS) Algorithm}
	\label{alg:top}
\end{algorithm}


BTS is rather simple since it directly outputs a seed set for one of the constituent topics.
However, we show below that even such a simple scheme could provide a theoretical approximation guarantee
	(if the influence spread function is sub-additive as defined below).
Thus, we use BTS as a baseline for preprocessing based algorithms.

We say that influence spread function $\sigma(S,p)$ is {\em $c$-sub-additive in $p$} for some constant
	$c$ if for every set $S \subseteq V$ with $|S| \leq k$ and every mixture $(\lambda_1, \lambda_2,
		\ldots, \lambda_d)$, 
	$\sigma(S, \sum_{i \in D^+_I} \lambda_i p_i)$ $ \leq$ $ c \sum_{i \in D^+_I} \sigma(S, \lambda_i p_i)$.
The sub-additivity property above means that the influence spread of any seed set $S$ in any topic mixture
	will not exceed constant times of the sum of the influence spread of the same seed set for each individual topic.
It is easy to verify that, when the network is fully separable for all topic pairs, 
	$\sigma(S,p)$ is $1$-sub-additive.
The only counterexample to the sub-additivity assumption that we could find is 
	a tree structure where even layer edges are for one topic
	and odd layer edges are for another topic.
Such structures are rather artificial, and we believe that for real networks 
	the influence spread is $c$-sub-additive in $p$ with a reasonably small constant $c$.

%

We define $\mu_{\max} = \max_{i \in [d], \lambda \in [0,1]}
  \frac{\sigma(\Seedg(k, \olambda p_i), \olambda p_i)}
    {\sigma(\Seedg(k, \ulambda p_i), \ulambda p_i)}$, which is a value controlled by preprocessing.
A fine-grained landmark set $\Lambda$ could make $\mu_{\max}$ close to $1$. 
The following Theorem~\ref{thm:top} guarantees the theoretical approximation ratio of Algorithm~\ref{alg:top}.

\begin{theorem} \label{thm:top}
If the influence spread function $\sigma(S,p)$ is $c$-sub-additive in $p$,
Algorithm~\ref{alg:top} achieves $\frac{1-e^{-1}}{c |D^+_I| \mu_{\max}}$ approximation ratio for item $I = (\lambda_1, \lambda_2, \cdots, \lambda_{d})$.
\end{theorem}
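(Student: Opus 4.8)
The plan is to lower bound the influence spread $\sigma(A,p)$ of the returned set $A = \Seedg(k,\ulambda_{i'}p_{i'})$ under the true mixed function $p = \sum_{i\in D^+_I}\lambda_i p_i$, and to match it against the optimum $\sigma(\Seed,p)$ with $\Seed = \Seed(k,p)$. The overall strategy is a three-stage reduction: (i) pass from the output under $p$ down to the per-topic greedy values $\sigma(\Seedg(k,\ulambda_i p_i),\ulambda_i p_i)$ at the round-down landmarks; (ii) relate each such value back to $\sigma(\Seed,\lambda_i p_i)$; and (iii) reassemble these per-topic optimal contributions into $\sigma(\Seed,p)$ using the $c$-sub-additivity hypothesis.

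For stage (i), I would invoke Lemma~\ref{lemma:monotone} twice. Since every summand of $p$ is nonnegative, we have $p \ge \lambda_{i'}p_{i'} \ge \ulambda_{i'}p_{i'}$ pointwise on edges, so monotonicity in $p$ gives $\sigma(A,p) \ge \sigma(A,\ulambda_{i'}p_{i'}) = \sigma(\Seedg(k,\ulambda_{i'}p_{i'}),\ulambda_{i'}p_{i'})$. Because line~\ref{line:topselect} of Algorithm~\ref{alg:top} selects $i'$ to maximize $\sigma(\Seedg(k,\ulambda_i p_i),\ulambda_i p_i)$ over $i\in D^+_I$, this maximum is at least the average, yielding $\sigma(A,p) \ge \frac{1}{|D^+_I|}\sum_{i\in D^+_I}\sigma(\Seedg(k,\ulambda_i p_i),\ulambda_i p_i)$.

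Stage (ii) is the crux, and the step where the inequality directions must be tracked carefully. Fix a topic $i\in D^+_I$. Chaining monotonicity ($\lambda_i \le \olambda_i$), optimality of $\Seed(k,\olambda_i p_i)$ together with $|\Seed|\le k$, the greedy guarantee at the round-up landmark, and finally the definition of $\mu_{\max}$, I would obtain $\sigma(\Seed,\lambda_i p_i) \le \sigma(\Seed,\olambda_i p_i) \le \sigma(\Seed(k,\olambda_i p_i),\olambda_i p_i) \le \frac{1}{1-e^{-1}-\varepsilon}\sigma(\Seedg(k,\olambda_i p_i),\olambda_i p_i) \le \frac{\mu_{\max}}{1-e^{-1}-\varepsilon}\sigma(\Seedg(k,\ulambda_i p_i),\ulambda_i p_i)$. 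Rearranging gives the per-topic bound $\sigma(\Seedg(k,\ulambda_i p_i),\ulambda_i p_i) \ge \frac{1-e^{-1}-\varepsilon}{\mu_{\max}}\sigma(\Seed,\lambda_i p_i)$. The subtlety, and the main obstacle, is that the greedy factor enters by upper bounding the per-topic optimum with the (larger) round-up greedy value; after rearrangement it must correctly land in the numerator rather than being inverted, and the $\mu_{\max}$ factor must bridge the round-up and round-down greedy values in the right direction.

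Finally, for stage (iii) I would sum the per-topic bound over $i\in D^+_I$ and invoke $c$-sub-additivity, $\sigma(\Seed,p) = \sigma(\Seed,\sum_{i\in D^+_I}\lambda_i p_i) \le c\sum_{i\in D^+_I}\sigma(\Seed,\lambda_i p_i)$. Combining with stage (i) then gives $\sigma(A,p) \ge \frac{1-e^{-1}-\varepsilon}{|D^+_I|\mu_{\max}}\sum_{i\in D^+_I}\sigma(\Seed,\lambda_i p_i) \ge \frac{1-e^{-1}-\varepsilon}{c|D^+_I|\mu_{\max}}\sigma(\Seed,p)$, which is exactly the claimed ratio up to the Monte-Carlo slack $\varepsilon$. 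No structural fact beyond monotonicity, greedy optimality, and the two defining quantities ($c$-sub-additivity and $\mu_{\max}$) is required, so the proof is essentially careful bookkeeping of these four inequalities.
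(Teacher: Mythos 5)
Your proof is correct and follows essentially the same route as the paper's: the same chain of monotonicity (Lemma~\ref{lemma:monotone}), round-up/round-down landmark comparison via $\mu_{\max}$, the greedy $(1-e^{-1})$ guarantee per topic, the max-vs-average step over $D^+_I$, and $c$-sub-additivity to reassemble the mixture; you merely state it as a lower bound on the output rather than an upper bound on the optimum. The only cosmetic differences are that you carry the Monte-Carlo slack $\varepsilon$ explicitly (the paper's proof drops it) and you add the final harmless step $\sigma(A,p)\ge\sigma(A,\ulambda_{i'}p_{i'})$, which the paper leaves implicit.
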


\begin{proof}
Denote $\Seed = \Seed(k, p)$, $\oSeed_i = \Seed(k, \olambda_i p_i)$,
$\oSeedg_i = \Seedg(k, \olambda_i p_i)$ and 
$\uSeedg_i = \Seedg(k, \ulambda_i p_i)$. 
Since $\sigma(S, p)$ is monotone (Lemma~\ref{lemma:monotone}) and $c$-sub-additive in $p$, 
	it implies 
$\sigma(\Seed, p) = \sigma(\Seed, \sum_{i \in D^+_I} \lambda_i p_i)
	\leq c \sum_{i \in D^+_I} \sigma(\Seed, \lambda_i p_i)$
	$\leq$ $c \sum_{i \in D^+_I} \sigma(\Seed, \olambda_i p_i)$.
From \cite{kempe2003maximizing}, we know
$\sigma(\Seed(k, p_0), p_0) \leq \frac{1}{1-e^{-1}} \sigma(\Seedg(k, p_0), p_0)$
holds for any $p_0$ in Algorithm~\ref{alg:greedy}.
Thus we have, for each $i \in D^+_I$,
$
\sigma(\Seed, \olambda_i p_i)
	\leq \sigma(\oSeed_i, \olambda_i p_i)
	\leq \frac{\sigma(\oSeedg_i, \olambda_i p_i)}{1-e^{-1}} 
	\leq \frac{\mu_{\max} \cdot \sigma(\uSeedg_i, \ulambda_i p_i)}{1-e^{-1}}$.
According to line~\ref{line:topselect}
of Algorithm~\ref{alg:top},
$i'$ satisfies
$\sigma(\uSeedg_{i'}, \ulambda_{i'} p_{i'}) = \max_{i \in D^+_I} \sigma(\uSeedg_i, \ulambda_i p_i)$, 
and $\sigma(\uSeedg_{i'}, \ulambda_{i'} p_{i'}) \leq \sigma(\uSeedg_{i'}, \lambda_{i'} p_{i'})$.
Thus, connecting all the inequalities, we have
$\sigma(\Seed, p) 
	$ $\leq$ $ \frac{c |D^+_I| \mu_{\max}}{1-e^{-1}} \sigma(\uSeedg_{i'}, \lambda_{i'} p_{i'})$.
Therefore, Algorithm~\ref{alg:top}
achieves approximation ratio of $\frac{1}{c |D^+_I| \mu_{\max}}(1-\frac{1}{e})$ 
under the sub-additive assumption.  \QED
\end{proof}

The approximation ratio given in the theorem is a conservative bound for the worst case 
(e.g., a common setting may be $c=1$, $\mu_{\max}=1.5$, $|D^+_I|=2$).
Tighter online bound in our experiment section based on \cite{leskovec2007cost}
	shows that Algorithm~\ref{alg:top} performs much better than the worst case scenario.


\junk{
\subsection{C-Greedy algorithm}
BTS focuses on minimizing online processing by selecting a pre-computed seed set for a topic.
Our second algorithm explores the other end of the spectrum and focuses on maintaining influence spread
	while reducing the size of candidates in seed set search.
This algorithm is based on Observation~\ref{obv:seed}, which says most seeds for the topic mixture
	come from the seeds for each individual topic in the mixture.
Thus, in the preprocessing stage, we compute the seed set $\Seedg(k, p_i)$ for each topic $i$
	using the greedy algorithm (Algorithm~\ref{alg:greedy}).
Then in the online stage, when a query on an item $I = (\lambda_1, \lambda_2, \cdots, \lambda_d)$ comes,
	we first compute the candidate set as the set of nodes from the seed nodes of topics with 
	positive support in the mixture, i.e., $C(k, p) = \cup_{i \in [d], \lambda_i > 0} \Seedg(k, p_i)$
	with $p = \sum_{i \in [d]} \lambda_i p_i$, and
	then only run the greedy algorithm on this candidate set.
Algorithm~\ref{alg:cgreedy} describes this algorithm, which we call C-Greedy algorithm (C stands for
	candidates).

Since the candidate set $C(k, p)$ is much smaller than the full node set $V$, C-Greedy algorithm should
	be much faster than the original greedy algorithm.
In the experimental section, we will empirically show that the influence spread of the seeds found
	by the C-Greedy algorithm is very close to that of the seeds found by the original greedy algorithm.

\begin{algorithm}[t]
\begin{algorithmic}[1]
	\REQUIRE $G=(V,E)$, $k$, $\{p_i\,|\, i\in [d]\}$, $\Seedg(k, p_i)$.
	\STATE $S_0 = \emptyset$
	\STATE $C(k, p) = \cup_{i \in [d], \lambda_i > 0} \Seedg(k, p_i)$
	\FOR{$j = 1,2,\cdots,k$}
		\STATE $v_j = \argmax_{v \in C(k, p)\setminus S_{j-1}} 	
			\MI(v | S_{j-1}, p)$ 
		\STATE $S_j = S_{j-1} \cup \{v_j\}$
	\ENDFOR
	\RETURN $S_k$
\end{algorithmic}
\caption{C-Greedy Algorithm}
\label{alg:cgreedy}
\end{algorithm}
}

\subsection{Marginal Influence Sort (MIS) algorithm}

Our second algorithm derives the seed set from pre-computed seed set of constituent topics, which is
	based on Observation~\ref{obv:seed}.
Moreover, it uses
	marginal influence information pre-computed to help select seeds from different seed sets.
Our idea is partially motivated from Observation~\ref{obv:topic}, especially the observation
	on Arnetminer dataset, which shows that in some cases the network could be well separated among
	different topics.
Intuitively, if nodes are separable among different topics, and each node $v$ is only pertinent to one topic
	$i$,
	the marginal influence of $v$ would not change much whether it is for a mixed item or the pure topic
		$i$.
The following lemma makes this intuition precise for the extreme case of fully separable networks.

\begin{lemma}\label{lemma:separable}
If a network is fully separable among all topics, then for any $v \in V$ and topic $i \in [d]$ 
	such that $\sigma(v, p_i) > 1$, 
	for any item $I = (\lambda_1, \lambda_2, \dots, \lambda_{d})$, for any seed set $S \subseteq V$,
	we have $\MI(v | S, \lambda_i p_i) = \MI(v | S, p)$, where $p = \sum_{j\in[d]} \lambda_j p_j$.
\end{lemma}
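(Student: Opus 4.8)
The plan is to run everything through the live-edge coupling introduced in the proof of Lemma~\ref{lemma:monotone}. First I would unpack full separability: since $\rvo_{ij}(0)=0$ for every pair $i\ne j$, the active node sets $\vfilter_i(0)$ are pairwise disjoint. I would then note that whenever $p_j(u,w)>0$ both endpoints satisfy $u,w\in\vfilter_j(0)$, so every edge carrying positive probability on topic $j$ lies entirely inside the single block $\vfilter_j(0)$, and no positive-probability edge ever joins two distinct blocks. Hence on any edge at most one topic contributes, and restricted to edges inside $\vfilter_i(0)$ the mixed function $p=\sum_{j}\lambda_j p_j$ coincides exactly with $\lambda_i p_i$.

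Next I would place $v$ in the correct block. The hypothesis $\sigma(v,p_i)>1$ forces $v$ to have at least one outgoing edge with $p_i(v,\cdot)>0$, so $v\in\vfilter_i(0)$; by disjointness $v$ lies in no other block, and every edge incident to $v$ is a topic-$i$ edge. Because $\vfilter_i(0)$ has no positive-probability edges crossing its boundary, it is a closed region for the live-edge graph under either $p$ or $\lambda_i p_i$: no live edge enters or leaves it. Therefore every node reachable from $v$ stays inside $\vfilter_i(0)$, and reachability of any node of $\vfilter_i(0)$ from a seed set $S$ is determined solely by $S\cap\vfilter_i(0)$ together with the live edges lying inside the block.

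I would then rewrite the marginal influence in the live-edge picture. Using the coupling, $\MI(v\mid S,q)=\mathbb{E}_x\bigl[\,|R_q(S\cup\{v\})\setminus R_q(S)|\,\bigr]$, where $R_q(\cdot)$ is the set reachable in the live-edge graph determined by the common random thresholds $x$ and probability function $q$. The difference set consists precisely of the nodes reachable from $v$ but not from $S$; by the closedness of $\vfilter_i(0)$ this set is contained in $\vfilter_i(0)$ and its membership depends only on the live edges inside the block. Since those in-block live edges have identical laws under $p$ and under $\lambda_i p_i$ (both reduce to the event $x(e)\le\lambda_i p_i(e)$), the two expectations coincide, giving $\MI(v\mid S,p)=\MI(v\mid S,\lambda_i p_i)$.

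The step I expect to be the main obstacle is making rigorous the claim that reachability into $\vfilter_i(0)$ is self-contained, i.e. that no path from $S$ can sneak into the block from outside and that the newly activated nodes depend only on the in-block live edges. This hinges on verifying that the block is closed under live edges in both models, which follows from the disjointness of the $\vfilter_j(0)$ but must be stated carefully so that the coupling on the shared in-block thresholds can be invoked. Once that is secured, the conclusion is a direct comparison of two identically distributed restricted live-edge graphs.
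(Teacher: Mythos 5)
Your proof is correct and takes essentially the same route as the paper's: both rest on the observation that full separability ($\rvo_{ij}(0)=0$ for all $i\ne j$) splits the positive-probability edges into pairwise disconnected per-topic blocks on which the mixed function $p$ coincides with $\lambda_i p_i$, and that $\sigma(v,p_i)>1$ places $v$ in block $i$. The only difference is in the finishing step, where the paper invokes the additive decomposition $\sigma(S,p')=\sum_{j\in[d]}\sigma(S\cap V_j,p')$ followed by a ``simple derivation,'' while you instead reuse the live-edge coupling from the proof of Lemma~\ref{lemma:monotone} to show the marginal gain set is pathwise identical under $p$ and $\lambda_i p_i$; this is a somewhat more explicit and rigorous implementation of the same locality argument, and it correctly secures the one point you flagged, namely that no live edge crosses a block boundary under either probability function.
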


\begin{proof}[Proof sketch]
Let $G_i=(V_i,E_i)$ be the subgraph of $G$ generated by edges $(u,w)$ such that $p_i(u,w) > 0$ and their
	incident nodes.
It is easy to verify that when the network is fully separable among all topics, 
	$G_i$ and $G_j$ are disconnected for any $i\ne j$. 
In this case, we have
	(a) for any node $v$ and topic $i$ such that $\sigma(v, p_i) > 1$, $v \in V_i$; 
	(b) for any edge $(u,w)\in E_i$, $p(u,w) = \lambda_i p_i(u,w)$; and 
	(c) $\sigma(S,p') = \sum_{j\in [d]} \sigma(S\cap V_j, p')$ for any $p'$.
With the above property, a simple derivation following the definition of marginal influence 
	will lead to $\MI(v | S, \lambda_i p_i) = \MI(v | S, p)$. \QED
\end{proof}


%
%

The above lemma suggests that we can use the marginal influence of a node on each topic when dealing
	with a mixture of topics.
Algorithm MIS is based on this idea.

Recall the detail of Algorithm~\ref{alg:greedy}, given any fixed probability $p$ and budget $k$, for each iteration $j = 1,2,\cdots, k$,
it calculates $v_j$ to maximize marginal influence $\MI(v_j | S_{j-1}, p)$ and let $S_j = S_{j-1} \cup \{ v_j \}$
every time, and output $\Seedg(k, p) = S_k$ as seeds.
Let $\MII(v_j, p) = \MI(v_j | S_{j-1}, p)$, if $v_j \in \Seedg(k, p)$, and $0$ otherwise.
$\MII(v_j, p) $ is the marginal influence of $v_j$ according to the greedy selection order.

The preprocessing goes as follows. 
We also use the landmark set $\Lambda = \{\clambda_0, \clambda_1, \clambda_2, \cdots, \clambda_m\}$.
For every $\lambda \in \Lambda$, 
we pre-compute $\Seedg(k, \lambda p_i)$, for every single topic $i \in [d]$,
and cache $\MII(v, \lambda p_i)$, $\forall v \in \Seedg(k, \lambda p_i)$ in advance by Algorithm~\ref{alg:greedy}.

With the above preparation, we can design Marginal Influence Sort (MIS) algorithm as described in Algorithm~\ref{alg:MIS}.
Given an item $I = (\lambda_1, \cdots, \lambda_{d})$, the online processing stage first 
	rounding down the mixture to $I'= (\ulambda_1, \cdots, \ulambda_d)$, and then 
	use the union $V^g = \cup_{i \in [d], \ulambda_i > 0} \Seedg(k, \ulambda_i p_i)$ as seed candidates.
If a node appears in multiple pre-computed seed sets, we add their marginal influence 
	in each set together (line~\ref{line:addMI}).
Then we simply sort all nodes in $V^g$ according to their computed marginal influence $f(v)$ and
	return the top $k$ nodes as seeds.

\begin{algorithm}[t]
\begin{algorithmic}[1]
	\REQUIRE $G=(V,E)$, $k$, $\{p_i\,|\, i\in [d]\}$, $I = (\lambda_1, \cdots, \lambda_{d})$, $\Lambda$,
			$\Seedg(k, \lambda p_i)$ and $\MII(v, \lambda p_i)$, $\forall \lambda \in \Lambda$,
			$\forall i \in [d]$.
	\STATE $I' = (\ulambda_1, \cdots, \ulambda_d)$
	\STATE $V^g = \cup_{i \in [d], \ulambda_i > 0} \Seedg(k, \ulambda_i p_i)$
	\FOR{$v \in V^g$}
		\STATE $f(v) = \sum_{i \in [d], \ulambda_i > 0} \MII(v, \ulambda_i p_i)$ \label{line:addMI}
	\ENDFOR
	\RETURN top $k$ nodes with the largest $f(v), \forall v \in V^g$
\end{algorithmic}
	\caption{Marginal Influence Sort (MIS) Algorithm}
	\label{alg:MIS}
\end{algorithm}


Although MIS is a heuristic algorithm, it does guarantee the same performance as the
	original greedy algorithm in fully separable
	networks when the topic mixtures is from the landmark set, as shown by the theorem below.
Note that in a fully separable network, it is reasonable to assume that seeds for one topic comes from
	the subgraph for that topic, and thus seeds from different topics are disjoint.
\begin{theorem} \label{thm:MIS}

Suppose $I = (\lambda_1, \lambda_2, \cdots, \lambda_d)$, where each $\lambda_i \in \Lambda$, and
	$\Seedg(k, \lambda_1 p_1)$, 
	$\cdots$, $\Seedg(k, \lambda_d p_d)$
	are disjoint.
If the network is fully separable for all topics, 
the seed set calculated by Algorithm~\ref{alg:MIS}
is one of the possible sequences generated by Algorithm~\ref{alg:greedy}
under the mixed influence probability
$p = \sum_{i \in [d]} \lambda_i p_i$.
\end{theorem}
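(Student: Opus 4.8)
The plan is to show that the set returned by Algorithm~\ref{alg:MIS} is realizable as the outcome of a valid run of the greedy Algorithm~\ref{alg:greedy} on the mixed probability $p = \sum_{i\in[d]}\lambda_i p_i$, by exploiting that on a fully separable network each topic occupies its own component. First I would reuse the decomposition from the proof of Lemma~\ref{lemma:separable}: let $G_i=(V_i,E_i)$ be the subgraph carrying topic $i$; full separability makes the $V_i$ pairwise disjoint, influence additive across components ($\sigma(S,p')=\sum_j\sigma(S\cap V_j,p')$), and every edge of $E_i$ satisfies $p=\lambda_i p_i$. The engine is Lemma~\ref{lemma:separable}: for $v\in V_i$ we have $\MI(v\mid S,\lambda_i p_i)=\MI(v\mid S,p)$, and by additivity this quantity depends only on $S\cap V_i$. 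Two easy reductions come first. Since each $\lambda_i\in\Lambda$ is already a landmark, the round-down is the identity, $\ulambda_i=\lambda_i$; and since the seed sets $\Seedg(k,\lambda_i p_i)$ are disjoint by hypothesis, each candidate $v\in V^g$ lies in exactly one of them, so the sum in line~\ref{line:addMI} collapses to a single term, $f(v)=\MII(v,\lambda_i p_i)$, the greedy marginal of $v$ in its own topic.

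The heart of the argument is a prefix invariant maintained along a greedy run on $p$. For each topic $i$ let $s^i_1,s^i_2,\dots$ denote the order in which Algorithm~\ref{alg:greedy} selects the seeds of $\Seedg(k,\lambda_i p_i)$. I would prove by induction on the greedy step for $p$ that, after any number of selections, the nodes chosen from $V_i$ form a prefix $\{s^i_1,\dots,s^i_{\ell_i}\}$ of topic $i$'s greedy sequence. Assuming this, consider the next selection: for an active topic $i$, the best available candidate in $V_i$ is exactly $s^i_{\ell_i+1}$, because by Lemma~\ref{lemma:separable} the marginal of any unselected $w\in V_i$ under $p$ equals its marginal under $\lambda_i p_i$ given the prefix $\{s^i_1,\dots,s^i_{\ell_i}\}$, and the single-topic greedy chose $s^i_{\ell_i+1}$ precisely as the argmax at that prefix; moreover this marginal equals $\MII(s^i_{\ell_i+1},\lambda_i p_i)$. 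Nodes in inactive topics (where $\lambda_i=0$, so $p=0$ on $E_i$) or isolated nodes have marginal $1$, which does not exceed any $\MII$ value, and ties may be broken in favor of the candidates. Hence the global argmax is the next seed $s^{i^*}_{\ell_{i^*}+1}$ of whichever active topic offers the largest next $\MII$, which extends that topic's prefix by one and preserves the invariant. In other words, the greedy run on $p$ is exactly a \emph{merge} of the $d$ single-topic greedy sequences that at each step consumes the largest available greedy marginal.

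Finally I would close the loop using submodularity of $\sigma(\cdot,\lambda_i p_i)$ in the seed set: along each topic's greedy sequence the values $\MII(s^i_\ell,\lambda_i p_i)$ are non-increasing in $\ell$. Consequently, the merge that always consumes the globally largest available marginal selects, after $k$ steps, precisely the $k$ candidates in $V^g$ with the largest $\MII$ values while respecting within-topic prefix order—and this is exactly the top-$k$ set by $f$ returned by Algorithm~\ref{alg:MIS} (ordering MIS's output by decreasing $f$ reproduces one admissible greedy order, with ties broken identically). The main obstacle is establishing the prefix invariant cleanly, i.e.\ ruling out that the greedy run on $p$ ever deviates to a non-candidate node; this is where Lemma~\ref{lemma:separable} (marginal equivalence) and the component-locality of marginals do the essential work, and where the $\sigma(v,p_i)>1$ hypothesis together with the disjointness assumption is needed to dispose of the degenerate isolated-node cases. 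The remaining bookkeeping—the identity $\ulambda_i=\lambda_i$, the single-term collapse of $f$, and the tie-breaking in the merge—is routine.
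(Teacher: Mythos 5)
Your proposal is correct and takes essentially the same route as the paper: both arguments hinge on Lemma~\ref{lemma:separable} together with the disjointness hypothesis to identify $f(v)$ with the per-topic greedy marginal $\MII(v,\lambda_i p_i)$, and then show by induction (with ties broken favorably, and budget $k>j$ guaranteeing an unexhausted per-topic seed) that the MIS output sorted by decreasing $f$ is a realizable run of Algorithm~\ref{alg:greedy} under $p$. The only difference is bookkeeping direction: the paper inducts on the MIS sequence and rules out a non-candidate $v'$ by exhibiting an unselected seed $u\in\Seedg(k,\lambda_{i'}p_{i'})$ dominating it, whereas you simulate the greedy run on $p$ forward as a merge of the per-topic greedy sequences (prefix invariant plus non-increasing $\MII$ by submodularity) --- the same structural facts either way.
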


\begin{proof}[Proof sketch]
Denote $v_1, v_2, \cdots , v_k \in V^g$ as the final seeds selected for the topic mixture
	in this order, and let $S_0 = \emptyset$
and $S_\ell = S_{\ell - 1} \cup \{ v_\ell \}$, for $\ell = 1,2,\cdots,k$. 
Since the network is fully separable and topic-wise seed sets are disjoint, by Lemma~4.1 
	we can get that $v_1, v_2, \cdots , v_k$  are selected  from topic-wise seeds sets,
	and $\forall v \in V^g$, $f(v) = \MI(v | S_{\ell - 1}, p)$.
We can prove that $v_{\ell} = \argmax_{v \in V \setminus S_{\ell-1}}$ $
	\MI(v | S_{\ell-1}, p)$, $\forall \ell = 1,2,\cdots,k$ by induction.
It is straightforward to see that $v_1 = \argmax_{v \in V}$ $ \MI(v | \emptyset, p)$.
Assume it holds for $\ell = j \in \{1,2,\cdots,k-1\}$.
Then, for $\ell = j+1$, for a contradiction we suppose that
the $(j+1)$-th seed $v'$ is chosen from $V \setminus V^{g}$ other than $v_{j+1}$,
i.e., $\MI(v' | S_{j}, p) > \MI(v_{j+1} | S_{j}, p)$. Denote $i'$ such that $\sigma(v', p_{i'}) > 1$.
Since budget $k > j$, we can find 
	a node $u \in \Seedg(k, \lambda_{i'} p_{i'}) \setminus S_{j}$,
	such that $\MI(u | S_{j}, \lambda_{i'} p_{i'})$ $\geq$ $\MI(v' | S_{j}, \lambda_{i'} p_{i'})$,
and $u$ is selected before $v_{j+1}$, which is a contradiction.
Therefore, we will conclude that $v_1, v_2$, $\cdots$, $v_k$ is one possible sequence from 
	the greedy algorithm. \QED
\end{proof}

The theorem suggests that MIS would work well for networks that are fairly separated among different
	topics, which are verified by our test results on the Arnetminer dataset.
Moreover, even for networks that are not well separated, it is reasonable to assume that the marginal
	influence of nodes in the mixture is related to the sum of its marginal influence in
	individual topics, and thus we expect MIS to work also competitively in this case, which is
	verified by our test results on the Flixster dataset.

\section{Experiments}\label{sec:experiments}

We test the effectiveness of our algorithms by using a number of real-world datasets, and compare them with
	state-of-the-art influence maximization algorithms.

\subsection{Algorithms for comparison}

In our experiments, we test our topic-aware preprocessing based algorithms MIS and BTS comprehensively.
We also select three classes of algorithms for comparison:
(a) Topic-aware algorithms: The topic-aware greedy algorithm (\TAGreedy) and a state-of-the-art fast
		heuristic algorithm PMIA (\TAPMIA) from \cite{wang2012scalable}; 
(b) Topic-oblivious algorithms: The topic-oblivious greedy algorithm (\TOGreedy), degree algorithm (\TODegree) and random algorithm (\TORandom);
(c) Simple heuristic algorithms that do not need preprocessing: The topic-aware PageRank algorithm (\TAPageRank) from \cite{Brin98web} and WeightedDegree algorithm (\TAWeightDegree).

The greedy algorithm we use employs lazy evaluation \cite{leskovec2007cost} to provide hundreds of time of speedup to
	the original Monte Carlo based greedy algorithm \cite{kempe2003maximizing}, and also provides the best theoretical guarantee. 
PMIA is a fast heuristic algorithm for the IC model based on trimming influence propagation to a tree
	structure and fast recursive computation on trees, and it achieves thousand fold speedup comparing
	to optimized greedy approximation algorithms with a small degradation on influence spread
	\cite{wang2012scalable} (in this paper, we set a small threshold $\theta = 1/1280$
	to alleviate the degradation). 

Topic-oblivious algorithms work under previous IC model that does not identify topics, 
i.e., it takes the fixed mixture $\forall j\in [d], \lambda_j=\frac{1}{d}$.
{\TOGreedy} runs greedy algorithm for previous IC model and uses the top-$k$ nodes as its seeds.
{\TODegree} outputs the top-$k$ nodes with the largest degree based on the original graph. {\TORandom} 
simply chooses $k$ nodes at random.

We also carefully choose two simple heuristic algorithms
	that do not need preprocessing.
{\TAPageRank} uses the probability of the topic mixture as its transfer probability, and runs PageRank algorithm
	to select $k$ nodes with top rankings. The damping factor is set to $0.85$.
{\TAWeightDegree} uses the degrees weighted by the probability from topic mixtures, and selects top-$k$ nodes 	with the highest weighted degrees.

Finally, we study the possibility of acceleration for large graphs by comparing PMIA
with greedy algorithm in preprocessing stage.
Therefore, we denote MIS and BTS algorithms, utilizing the seeds and marginal influence 
from greedy and PMIA, as {\MISGreedy}, {\BTSGreedy} and {\MISPMIA}, {\BTSPMIA}, respectively.

\subsection{Experiment setup}

We conduct all the experiments on a computer with 2.4GHz Intel(R) Xeon(R) E5530 CPU, 2 processors (16 cores), 48G memory, and
	an operating system of Windows Server 2008 R2 Enterprise (64 bits).
The code is written in C++ and compiled by Visual Studio 2010. 


We test these algorithms on the Flixster and Arnetminer datasets as we described in Section~\ref{sec:obs},
	which have the advantage that
	the influence probabilities of all edges on all topics are learned from real action trace data
	or node topic distribution data.
To further test the scalability of different algorithms, we use a larger network data DBLP, which is also used
	in \cite{wang2012scalable}.
DBLP is an academic collaboration network extracted from the online service (www.DBLP.org), where nodes 
	represent authors and edges represent coauthoring relationships.
It contains 650K nodes and 2 million edges.
As DBLP does not have influence probabilities from the real data, 
	we simulate two topics according to the joint distribution of
	topics 1 and 2 in the Flixster and follow the practice of the TRIVALENCY model in \cite{wang2012scalable}
	to rescale it into $0.1$, $0.01$, or $0.001$, standing for strong, medium, and low influence, respectively.

In terms of topic mixtures, in practice and also supported by our data,
	 an item is usually a mixture of a small number of
	topics
	thus our tests focus on testing topic mixtures from two topics.
First, we test random samples to cover most common mixtures as follows. 
	For these three datasets, we uses 50 topic mixtures as testing samples.\footnote{50 samples is mainly to fit for the slow greedy algorithm.}
Each topic mixture is uniformly selected from all possible two topic mixtures.
Second, since we have the data of real topic mixtures in Flixster dataset, we also test
	additional cases following the same sampling technique described in Section~3.1 of \cite{aslayonline}.
	We estimate the Dirichlet distribution that maximizes the likelihood over topics learned from the data.
	After the distribution is learned, we re-sample 50 topic mixtures for testing.

In the preprocessing stage, we use two algorithms, Greedy and PMIA, to pre-compute seed sets
	for MIS and BTS, except that
	for the DBLP dataset, which is too large to run the greedy algorithm, we only run PMIA.
Algorithms MIS and BTS need to pre-select landmarks $\Lambda$.
	In our tests, we use 11 equally distant landmarks $\{0, 0.1, 0.2, \ldots, 0.9, 1\}$. Each landmarks
	can be pre-computed independently, therefore we run them on 16 cores concurrently in different processes.

We choose $k=50$ seeds in all our tests and compare the influence spread and running time of each algorithm. 
For the greedy algorithm, we use $10000$ Monte Carlo simulations. 
We also use $10000$ simulation runs and take the average to obtain the influence spread for each 
	selected seed set.

In addition, we apply offline bound and online bound to estimate influence spread of optimal solutions.
Offline bound is the influence spread of any greedy seeds multiplied by factor $1/(1-e^{-1})$.
The online bound is based on Theorem~4 in \cite{leskovec2007cost}: for any seed set $S$,
	its influence spread plus the sum of top $k$ marginal influence spread of $k$ other nodes is
	an upper bound on the optimal $k$ seed influence spread.
We use the minimum of the upper bounds among the cases of $S=\emptyset$ and $S$ being one of the
	greedy seed sets selected.


\subsection{Experiment results}

\begin{figure*}[!th]
\centering
\begin{minipage}[b]{1\textwidth}
\centering
\subfigure[Arnetminer on random samples]{
{\includegraphics[width=0.48\textwidth]{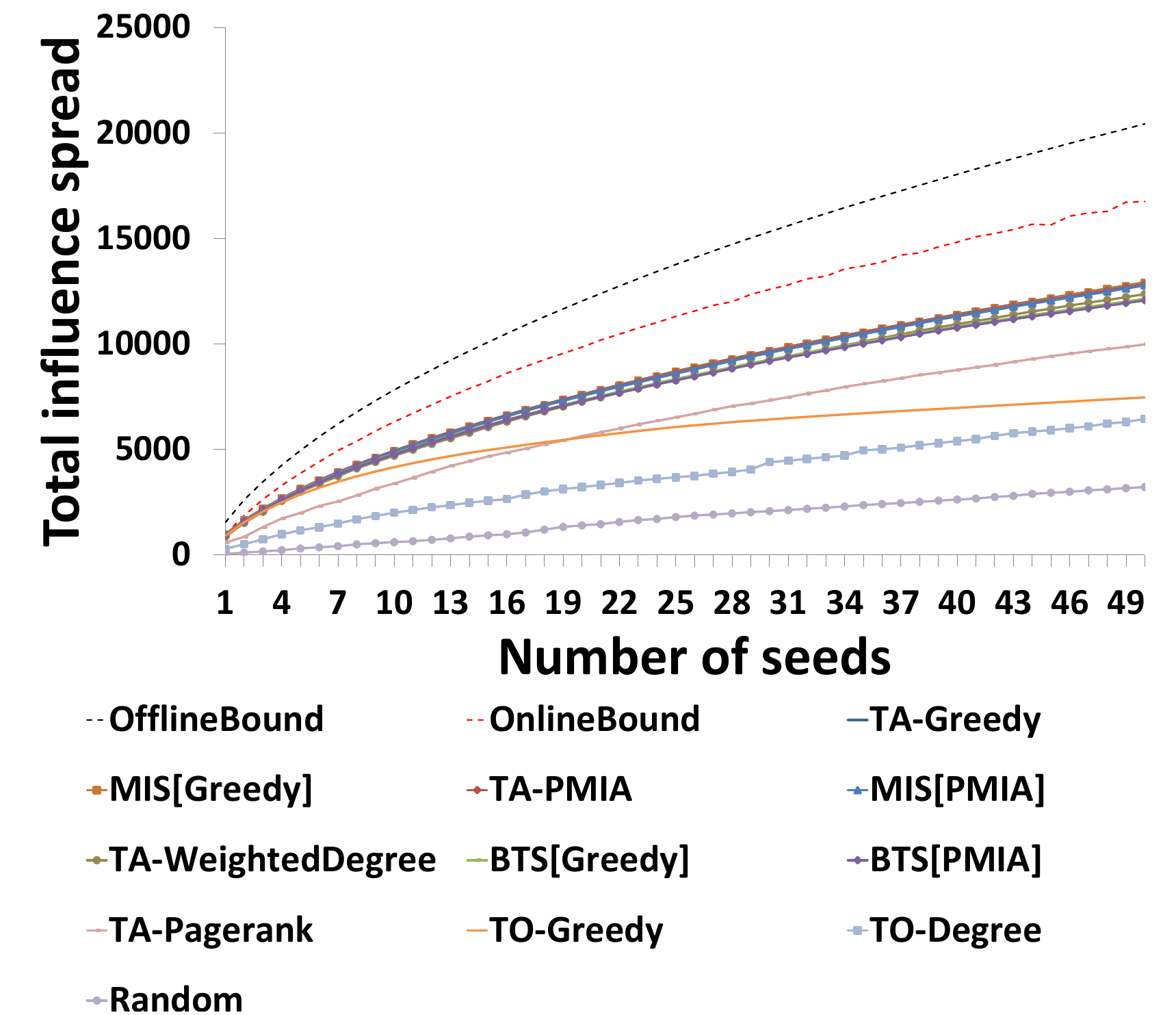}}
}
\subfigure[Flixster on random sample]{
{\includegraphics[width=0.48\textwidth]{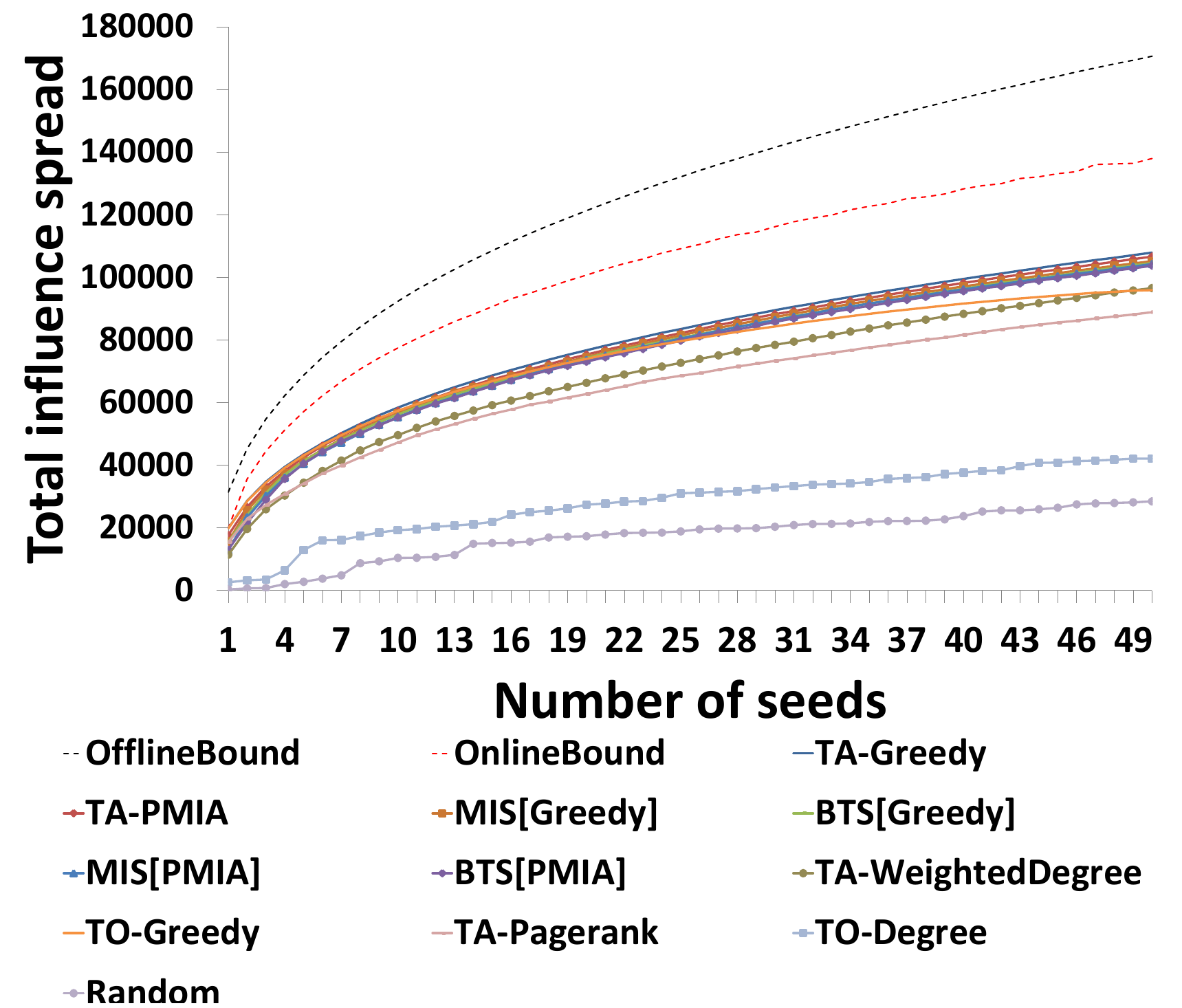}}
}
\end{minipage}
\begin{minipage}[b]{1\textwidth}
\subfigure[Flixster on Dirichlet sample]{
{\includegraphics[width=0.48\textwidth]{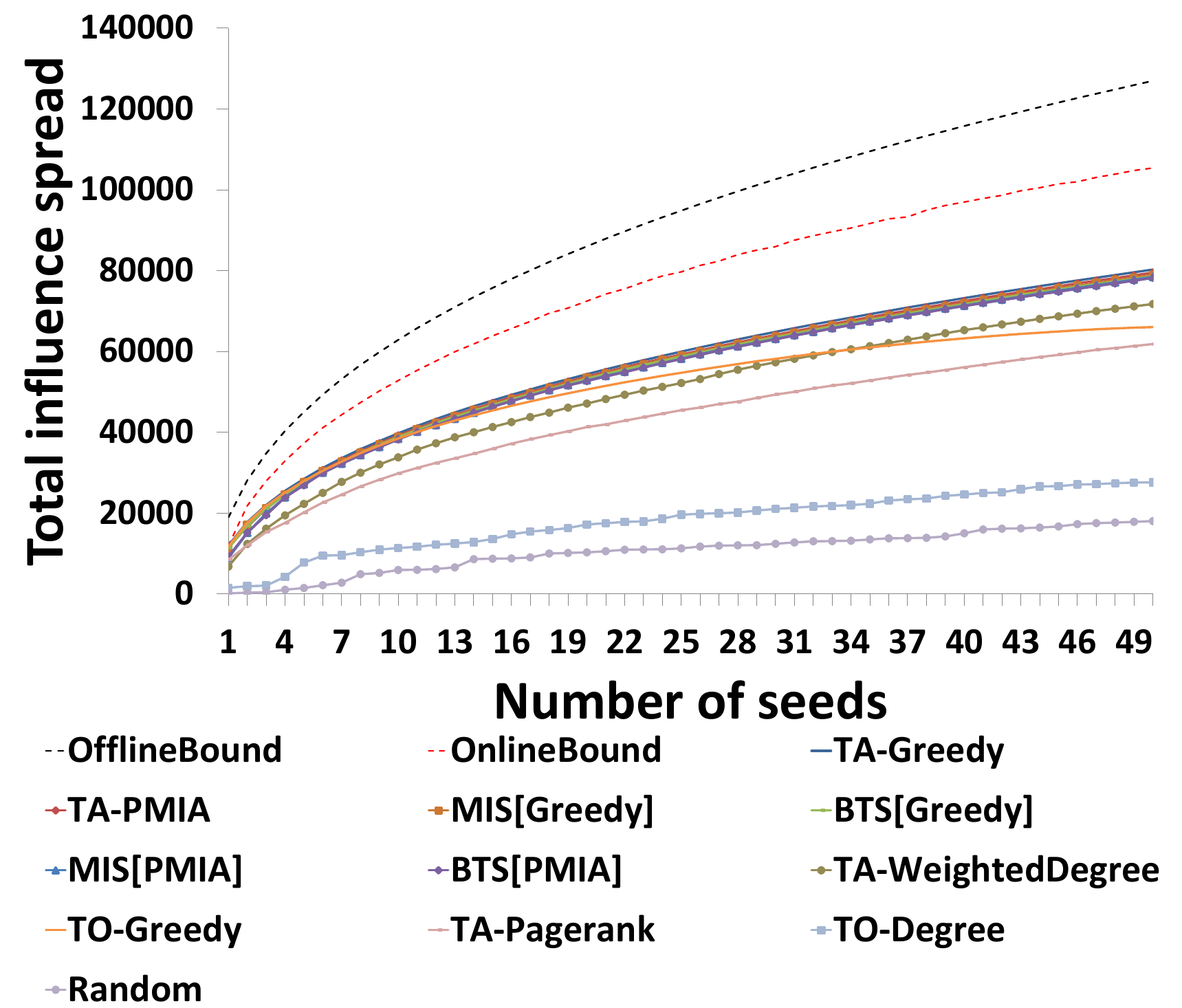}}
}
\subfigure[DBLP on random sample]{
{\includegraphics[width=0.48\textwidth]{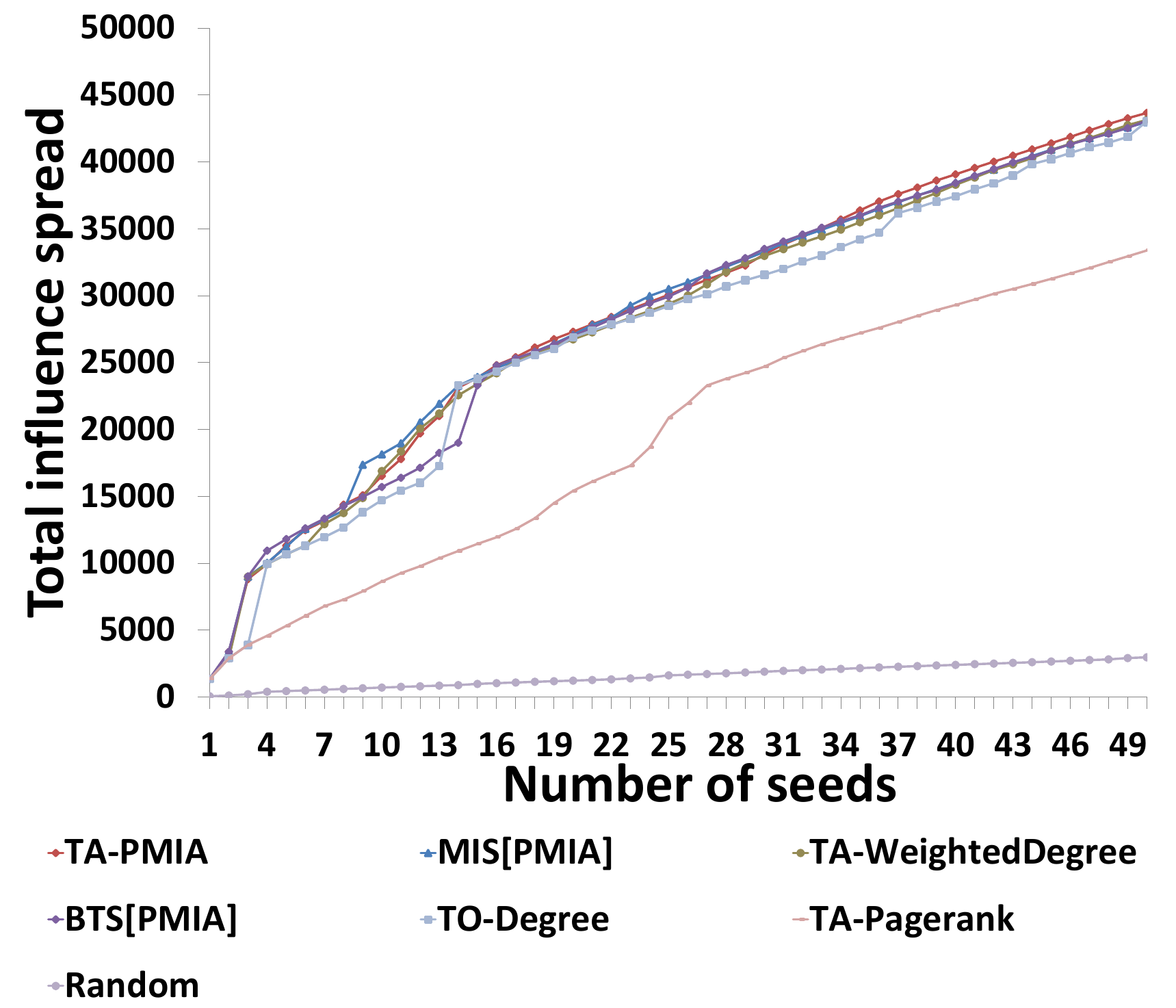}}
}
\end{minipage}

\centering
\caption{Influence spread of algorithms. Legends are ordered (left to right, top to bottom) according to influence spread.}
\label{fig:spread}
\end{figure*}

%
%
%

\begin{table}[t]
	\centering
  	\caption{Preprocessing time} \label{tab:pre_time}
  	{\tiny
  		\begin{tabular}{lcccccc}
            \toprule
                  & \multicolumn{2}{c}{Arnetminer} & \multicolumn{2}{c}{Flixster} & \multicolumn{2}{c}{DBLP} \\
                  & \multicolumn{2}{c}{{\tiny ($8\times 11$ landmarks)}} & \multicolumn{2}{c}{{\tiny ($10\times 11$ landmarks)}} & \multicolumn{2}{c}{{\tiny ($2\times 11$ landmarks)}} \\
                  & Total & Max   & Total & Max   & Total & Max \\
            \midrule
            Greedy & 8.8 hrs & 1.2 hrs & 26.3 days & 3.5 days & $\geq 100$ days & $\geq 7$ days \\
            PMIA  & 37 secs & 7.1 secs & 2.28 hrs & 12.6 mins & 9.6 mins & 4.2 mins \\
            \bottomrule
        \end{tabular}%
  	}
\end{table}%

\begin{table}[t]
	\centering
  	\caption{Average online response time} \label{tab:run_time}
  	{\scriptsize
  	\begin{tabular}{rcccc}
  	    \toprule
  	          & \multirow{2}[0]{*}{Arnetminer} & \multicolumn{2}{c}{Flixster} & \multirow{2}[0]{*}{DBLP} \\
  	          &       & random & Dirichlet &  \\
  	    \midrule
  	    \multicolumn{1}{l}{\footnotesize \textsf{TA-Greedy}} & 9.3 mins & 1.5 days & 20 hrs & N/A \\
  	    \multicolumn{1}{l}{\footnotesize \textsf{TA-PMIA}} & 0.52 sec & 5.5 mins & 3.8 mins & 58 secs \\
  	    \multicolumn{1}{l}{\footnotesize \textsf{MIS}} & 2.85 \textmu s & 2.37 \textmu s & 3.84 \textmu s & 2.09 \textmu s \\
  	    \multicolumn{1}{l}{\footnotesize \textsf{BTS}} & 1.20 \textmu s & 2.35 \textmu s & 1.42 \textmu s & 0.49 \textmu s \\
  	    \multicolumn{1}{l}{\footnotesize \textsf{TA-PageRank}} & 0.15 sec & 2.08 secs & 2.30 secs & 41 secs \\
  	    \multicolumn{1}{l}{\footnotesize \textsf{TA-WeightedDegree}} & 8.5 ms & 29.9 ms & 30.7 ms & 0.32 sec \\
  	    \bottomrule
  	    \end{tabular}%
  	}
\end{table}%



Figure~\ref{fig:spread} shows the total influence spread results on Arnetminer with random samples (a);
	Flixster with random and Dirichlet samples, (b) and (c), respectively; and DBLP with random samples (d).
Table~\ref{tab:pre_time} shows the preprocessing time based on greedy algorithm and PMIA 
	algorithm on three datasets. 
Table~\ref{tab:run_time} shows the average online response time of various algorithms
	in finding 50 seeds (topic-oblivious algorithms always use the same
		seeds and thus are not reported).

As is shown in Table~\ref{tab:pre_time},
	we run each landmark concurrently, and count both the total CPU time and the maximum time needed for one landmark.
While the total time shows the cumulative preprocessing effort, the maximum time shows the latency when we
	use parallel preprocessing on multiple cores.
The results indicate that the greedy algorithm is suitable for small graphs but infeasible for large graphs
	like DBLP, while PMIA is a scalable preprocessing solution on large graphs.
For this reason, we test two preprocessing techniques and also compare their performance.

For the Arnetminer dataset (Figure~\ref{fig:spread} (a)), it clearly separates all algorithms into three tiers:
the top tier is {\TAGreedy}, {\TAPMIA}, {\MISGreedy} and {\MISPMIA}; the middle tier is {\TAWeightDegree}, {\BTSGreedy}, {\BTSPMIA} and {\TAPageRank};
and the lower tier is topic-oblivious algorithms {\TOGreedy}, {\TODegree} and {\TORandom}.
In particular, we measure the gaps of influence spread among different algorithms.
We observe that the gap of top tiers are negligible, 
	because {\TAPMIA}, {\MISGreedy} and {\MISPMIA} 
	are only $0.61\%$, $0.32\%$ and $1.08\%$ smaller than {\TAGreedy}, respectively;
	the middle tier algorithms {\BTSGreedy}, {\BTSPMIA}, {\TAWeightDegree} and {\TAPageRank} are $4.06\%$, $4.68\%$, $4.67\%$ and $26.84\%$ smaller, respectively; 
	and the lower tier {\TOGreedy}, {\TODegree} and {\TORandom} have difference of $28.57\%$, $56.75\%$ and $81.48\%$, respectively. 
	(All percentages reported in this section are averages over 
		influence spread from one seed to 50 seeds.)

The detailed analyses are listed as follows: First, topic-oblivious algorithms does not perform well 
	in topic-aware environment.
Based on Observation~\ref{obv:topic}, when topics are separated, 
	algorithms ignoring topic mixtures cannot find influential seeds
	for all topics, and thus do not have good influence spread.
Second, {\MISGreedy} and {\MISPMIA} almost match the influence spread 
	of those of {\TAGreedy} and {\TAPMIA}.
As is indicated from offline and online bounds, {\MISGreedy}, {\BTSGreedy} are 76.9\% and 72.5\%  	
	($> 1-e^{-1}$) of the online bound, which demonstrates their
	effectiveness better than their conservative theoretical bounds could 
	support.
The MIS algorithm runs super fast in online processing, 
	finishing 50 seeds selection in just a few microseconds (Table~\ref{tab:run_time}), which	
	is three orders of magnitude faster than the millisecond response time reported in
	\cite{aslayonline}, and at least three orders of magnitude faster than any other topic-aware
	algorithms.
This is because it relies on pre-computed marginal influence and only a sorting process is needed online.
Third, {\BTSGreedy} and {\BTSPMIA} are not expected to be better than {\MISGreedy} and {\MISPMIA}, since
	BTS is a baseline algorithm only selecting
	a seed set from one topic. However, due to the preprocessing stage,
	we find that it can even perform better than other simple topic-aware
	heuristic algorithms that have short online response time.
In addition, replacing the greedy algorithm with PMIA in the preprocessing stage, MIS and BTS 
	only lose $0.76\%$ and $0.62\%$ in influence spread, indicating that PMIA is a viable choice for
	preprocessing, which greatly reduces the offline preprocessing time (Table~\ref{tab:pre_time}).


What we can conclude from tests on Arnetminer is that, for networks where topics are
	well separated among nodes and edges such as in academic networks, utilizing preprocessing
	can greatly save the online processing time.
In particular, MIS algorithm is well suited for this environment achieving microsecond response time
	with very small degradation in seed quality.

For Flixster dataset (Figure~\ref{fig:spread} (b) and (c)), we see that 
	the influence spread of {\TAPMIA}, {\MISGreedy}, {\MISPMIA}, {\BTSGreedy} and {\BTSPMIA} are $1.78\%$, $3.04\%$,
	$4.58\%$, $3.89\%$ and $5.29\%$ smaller than {\TAGreedy} for random samples, and $1.41\%$, $1.94\%$, $3.37\%$, $2.31\%$ and $3.59\%$
	smaller for Dirichlet samples, respectively.
In Flixster, we can see that for networks where topics overlap with one another on nodes,
		our preprocessing based algorithms can still perform quite well.
This is because most seeds of topic mixtures are from the constituent topics (Observation~\ref{obv:seed}).
On the other hand, the influence of {\TAWeightDegree}, {\TAPageRank} and {\TOGreedy}
	will suffer a noticeable degeneration demonstrated from two curves.
In terms of online response time (Table~\ref{tab:run_time}), the result is consistent with the result
	for Arnetminer: only MIS and BTS can achieve microsecond level
	online response, and all other topic-aware algorithms need at
	least milliseconds since they at least need a ranking computation among all nodes in the graph.
In addition, {\TAPMIA} on Flixster is much slower than on Arnetminer, because both the network size and
	the computed MIA tree size are much larger, indicating that PMIA is not suitable in providing stable
	online response time.
In contrast, the response time of MIS and BTS do not change significantly among different graphs.

In DBLP (Figure~\ref{fig:spread} (d)), the graph is too large to run greedy algorithm, 
	thus we take {\TAPMIA} as the baseline algorithm to compare with other algorithms.
	For different algorithms, the influence spread is close to each other, and our results show that {\MISPMIA} has equal competitive influence spread with {\TAPMIA} ($0.44\%$ slightly larger),
	while {\BTSPMIA}, {\TAWeightDegree}, {\TODegree} and {\TAPageRank} are $1.33\%$, $1.83\%$, $6.05\%$ and $35.54\%$
	smaller than {\TAPMIA}, respectively. 
Combining Table~\ref{tab:pre_time} and Table~\ref{tab:run_time}, we find that the greedy algorithm
	is not suitable for preprocessing for large graphs, while PMIA can be used in this case.



To summarize, the greedy algorithm has the best influence spread performance, but is slow and 
	not suitable for large-scale networks or fast response time requirements.
PMIA as a fast heuristic can achieve reasonable performance in both influence spread and online
	processing time, but its response time varies significantly depending on graph size and
	influence probability parameters, and could take
	minutes or longer to complete.
Our proposed MIS emerges as a strong candidate for fast real-time processing of topic-aware influence 
	maximization task: it achieves microsecond response time, which does not depend on graph size
	or influence probability parameters, while its influence spread matches or is very close to the
	best greedy algorithm and outperforms other simple heuristics.
Furthermore, in large graphs where greedy is too slow to finish, PMIA is a viable choice for
	preprocessing, and our MIS using PMIA as the preprocessing algorithm achieves almost the same
		influence spread as MIS using the greedy algorithm for preprocessing. 
		

\section{Related Work} \label{sec:related}
Domingos and Richardson \cite{domingos2001mining,richardson2002mining} are the first 
	to study influence maximization in an algorithmic framework.
Kempe et al. \cite{kempe2003maximizing} first formulate the 
		discrete influence diffusion models including 
		the independent cascade model and linear threshold model, 
		and provide the first batch of algorithmic results on influence maximization.

A large body of work follows the framework of \cite{kempe2003maximizing}.
One line of research improves on the efficiency and scalability of influence maximization
	algorithms \cite{goyal2011celf++,chen2009efficient,wang2012scalable,goyal2011simpath}.
Others extend the diffusion models and study other related optimization problems
	(e.g., \cite{BAA11,bhagat_2012_maximizing,HeSCJ12}).
A number of studies propose machine learning methods to learn influence models and parameters
	(e.g., \cite{saito2008prediction,tang2009social,goyal2010learning}).
A few studies look into the interplay of social influence and topic distributions 
	\cite{tang2009social,liu2010mining,weng2010twitterrank,lin2011joint}.
They focus on inference of social influence from topic distributions or joint inference
	of influence diffusion and topic distributions.
They do not provide a dynamic topic-aware influence diffusion model nor
	study the influence maximization problem.
Barbieri et al. \cite{barbieri2013topic} introduce the topic-aware influence diffusion models TIC and TLT 
	as extensions to the IC and LT models.
They provide maximum-likelihood based learning method to learn influence parameters in these
	topic-aware models.
We use the their proposed models and their dataset with the learned parameters.

A recent independent work by Aslay et al. \cite{aslayonline} is the closest one to our work.
Their work focus on index building in the query space while we use pre-computed marginal influence
	to help guiding seed selection, and thus the two approaches are complementary.
Other differences have been listed in the introduction and will not be repeated here.

%

\section{Future Work} \label{sec:conclude}

	
One possible follow-up work is to combine the advantages of our approach and the approach in
	\cite{aslayonline} to further improve the performance.
Another direction is to study fast algorithms with stronger theoretical guarantee.
An important work is to gather more real-world datasets and conduct a thorough investigation
	on the topic-wise influence properties of different networks, similar to our preliminary investigation
	on Arnetminer and Flixster datasets. 
This could bring more insights to the interplay between topic distributions and influence diffusion,
	which could guide future algorithm design.


\section*{Acknowledgments}
We would like to thank Nicola Barbieri and Jie Tang, the authors of \cite{barbieri2013topic,tang2009social}, respectively, for providing Flixster and Arnetminer datasets.


\end{document}